\documentclass[12pt]{article}
\usepackage[margin=1in]{geometry}
\usepackage[english]{babel}
\usepackage{cancel}
\usepackage[colorlinks=true, allcolors=blue]{hyperref}
\usepackage{amsmath}
\usepackage{amsthm}
\usepackage{amssymb}
\usepackage{bbm}
\usepackage{bm}
\usepackage[dvipsnames]{xcolor}

\usepackage{rotating}
\usepackage{multirow}
\usepackage{graphicx}
\usepackage{threeparttable}
\usepackage{subcaption}
\usepackage{natbib}
\usepackage{booktabs}
\usepackage{array}

\newcommand{\ty}{\tilde{y}}

\newcommand{\bty}{\boldsymbol{\tilde{y}}}
\newcommand{\by}{\boldsymbol{y}}
\newcommand{\bx}{\boldsymbol{x}}
\newcommand{\bu}{\boldsymbol{u}}

\newcommand{\bbeta}{\boldsymbol{\beta}}
\newcommand{\bxi}{\boldsymbol{\xi}}
\newcommand{\YY}{\mathcal{Y}}
\newcommand{\wbxi}{\widetilde{\boldsymbol{\xi}}}
\newcommand{\lama}{\lambda^a}
\newcommand{\lams}{\lambda^\star}
\newcommand{\lamsa}{\lambda^{\star a}}

\newtheorem{proposition}{Proposition}

{
    \title{\bf A Bayesian Copula Dirichlet Process GLM for Longitudinal Data}
    \author{
Shijie Yuan\thanks{Department of Statistics and Data Sciences, The University of Texas at Austin, Austin, TX 78712, USA; Email: shijie2024@outlook.com}\ ,
Entejar Alam\thanks{Department of Statistics and Data Sciences, The University of Texas at Austin, Austin, TX 78712, USA; Email: entejar@utexas.edu}\ ,  
Asmit Chakraborty\thanks{Department of Statistics and Data Sciences, The University of Texas at Austin, Austin, TX 78712, USA; Email: asmit.chakraborty@utexas.edu}\ ,  
Paul J. Rathouz\thanks{Department of Population Health, The University of Texas at Austin, Austin, TX 78712, USA; Email: paul.rathouz@austin.utexas.edu}\ , \\
Jonathan S. Schildcrout\thanks{Department of Biostatistics, Vanderbilt University, Nashville, TN 37203, USA; Email: jonny.schild@vumc.org}\ ,
and Peter M\"uller\thanks{Department of Mathematics, The University of Texas at Austin, Austin, TX 78712, USA; Corresponding email: pmueller@math.utexas.edu}\     
    }
\date{}
}

\begin{document}
\maketitle
\begin{abstract}
We introduce a semiparametric Bayesian generalized linear model (GLM) for
longitudinal continuous outcomes that preserves the usual interpretation of
the GLM regression structure while leaving the baseline distribution
nonparametric. The proposed copula Dirichlet process generalized linear model
(CDPGLM) combines a Dirichlet process prior for the baseline distribution, a
parametric GLM regression structure for covariate effects, and a copula for
longitudinal within--subject dependence. This construction separates modeling 
the marginal distribution of an individual repeated measurement from 
modeling the association: the Dirichlet process GLM component defines the 
marginal density and regression mean, whereas the copula 
captures dependence among repeated measurements. In simulations, we compare 
the CDPGLM with a beta mixed-effect model with a subject-specific random intercept and two 
generalized estimating equation (GEE) approaches using exchangeable and 
independent working correlation structures. The methods show broadly 
comparable recovery of regression effects, while the CDPGLM becomes more 
accurate for estimating marginal distributions and lower-tail probabilities 
as the sample size increases. In an application to data from the National 
Heart, Lung, and Blood Institute (NHLBI) Lung Health Study, the CDPGLM and 
the beta mixed model showed qualitatively similar regression patterns, 
but the CDPGLM more closely matched the empirical marginal distribution 
and reported more realistic and detailed lower--tail risk summaries.  
\end{abstract}

\noindent \textbf{Keywords:} Bayesian nonparametrics; correlated data; 
distributional robustness; marginal distribution; semiparametric regression; 
tail probability.

\section{Introduction}

We introduce a semiparametric Bayesian generalized linear model (GLM) for longitudinal
data (repeated measurements) in a way that retains the interpretation in the GLM 
regression structure while remaining nonparametric on the baseline distribution. The 
construction combines a copula for the multivariate dependence structure, a Dirichlet
process (DP) prior on the baseline distribution, and the parametric GLM regression structure.  
GLMs provide a unified framework for relating the covariate-specific mean of the 
outcome to a linear predictor by way of a link function while
assuming a parametric outcome distribution
\citep{nelder1972generalized,mccullagh1989generalized}. Although widely used,
classical GLMs rely on distributional assumptions that may be difficult
to justify in biomedical and other applied settings, particularly when outcomes
exhibit skewness, heavy tails, multimodality arising from population heterogeneity, 
or when inference focuses on tail probabilities, as is often the case in biomedical 
studies. To relax these assumptions while retaining the regression structure of GLMs,
\citet{rathouz2009generalized} introduced the semiparametric generalized linear
model (SPGLM), also known as the generalized linear density ratio model (GLDRM) 
\citep{anderson1972separate}. In this framework, the outcome distribution is 
obtained by exponentially tilting an unspecified baseline distribution, thereby 
retaining the GLM regression structure without fully specifying the outcome 
distribution. Subsequent theoretical and methodological
developments have extended the model to outcomes with infinite support and
established existence, identifiability, and consistency under mild regularity
conditions \citep{huang2014joint}, along with efficient likelihood--based
estimation procedures and software implementation in the \texttt{gldrm} R
package \citep{wurm2018semiparametric}.

While classical GLMs have long been implemented within a Bayesian framework
through priors on regression parameters and parametric outcome families
\citep{dey2000generalized}, one of the first Bayesian formulations of semiparametric
regression placed Dirichlet priors on discretized baseline distributions,
yielding the Dirichlet GLMs (Dir--GLMs) suitable for a finite discrete space
\citep{alam2025dir}. However, such GLM constructions do not naturally extend to
continuous support. \citet{alam2025dpglm} addressed this limitation by modeling the
baseline measure as a random measure on continuous support, leading to the
Dirichlet process generalized linear model (DPGLM).
Under the DPGLM, the outcome distribution at each covariate profile
is generated through exponential tilting of a shared random baseline measure, 
preserving the SPGLM regression structure while enabling full Bayesian inference 
and principled uncertainty quantification.

While SPGLMs and their Bayesian extensions substantially relax
distributional assumptions for independent outcomes, many applications
involve clustered or longitudinal data requiring inference to account for dependence 
across observations. Generalized linear mixed models (GLMMs) extend GLMs
by introducing random effects to accommodate clustered, longitudinal, or
hierarchical dependence structures \citep{breslow1993approximate}. By modeling
the outcome distribution through both fixed and random effects, GLMMs
provide a flexible and widely used framework for correlated data and have
become standard tools in biomedical and longitudinal data analysis.
However, GLMMs inherit several structural limitations from the classical GLM
framework. First, the outcome distribution conditional on random effects is typically assumed to
belong to a parametric exponential family, so inference remains vulnerable to 
distributional misspecification. Second, regression models condition on random effects, 
and the population--level quantities that may be of interest and are 
obtained by integrating over the random effects are generally 
not available in closed form, particularly under nonlinear links.
Furthermore, dependence in GLMMs is induced indirectly through
latent Gaussian structures, which may restrict the range of attainable
dependence patterns, particularly in settings involving heterogeneous outcome
types or tail--sensitive joint risk assessment
\citep{tuerlinckx2006statistical}.

Marginalized longitudinal models provide an important alternative to conventional
GLMMs by specifying marginal regression effects directly while using
latent or transition structures to represent within--subject dependence.
Early work developed marginalized logistic--normal and multilevel models for
longitudinal outcomes, separating marginal regression parameters from the
dependence model \citep{heagerty1999marginally,heagerty2000marginalized}.
Other work applied this framework to likelihood--based transition models
and richer serial dependence for longitudinal binary or categorical responses
\citep{azzalini1994logistic,heagerty2002marginalized,schildcrout2007marginalized} and to ordinal
longitudinal outcomes \citep{lee2008marginalized}. More recent work has
considered ordinal outcomes with absorbing states
\citep{schildcrout2022modelassisted,di2026modeling}. These models preserve interpretable
marginal effects, but generally rely on more structured outcome distributions
and dependence mechanisms than the continuous--outcome model developed here.

Other methodological developments have addressed different components of
this problem. Robust GLMM formulations have relaxed distributional assumptions
on random effects, and semiparametric or mixture--based extensions have been
proposed to increase flexibility
\citep{kleinman1998semi,ryu2011bayesian,hannah2011dirichlet,antonelli2016mitigating}.
Copula--based marginal regression and mixed--model approaches have made explicit
dependence modeling possible while retaining marginal interpretations, including
Gaussian--copula models for correlated data
\citep{song2009joint,masarotto2012gaussian}. These models, however, generally
use more structured marginal outcome distribution specifications than the DPGLM. \citet{wang2019copula}
develop a closer semiparametric analogue for continuous longitudinal outcomes
by approximating covariate-specific marginal distributions through quantile
regression and modeling intra--subject dependence with a copula. 
\citet{barbanti2024transformation} developed a likelihood-based frequentist framework 
in which flexible monotone transformation functions characterize the marginal outcome 
distributions, while a latent multivariate normal structure captures within-subject dependence.
In a different direction, \citet{smith2010modeling} focus primarily on representing
rich serial dependence through a pair--copula decomposition with separately
specified margins, rather than on GLM--style marginal regression.

These contributions show that flexible marginal modeling and explicit
dependence modeling are both feasible, but they pursue different combinations
of marginal specification, dependence structure, and inferential targets. Our
contribution is to combine the Bayesian nonparametric marginal distribution of
the DPGLM with an explicit copula for repeated measurements. This construction
retains the semiparametric GLM interpretation of the regression effects while
supporting posterior inference for the full marginal distribution and derived
functionals, such as lower-tail probabilities, in longitudinal settings.
Throughout, unless otherwise stated, “marginal” refers to the distribution or mean 
of an individual repeated measurement and the associated regression effects.
The remainder of the paper is organized as follows. Section~\ref{sec:motivating_app}
introduces the motivating Lung Health Study application. Section~\ref{sec:model}
develops the proposed model and posterior specification. Section~\ref{sec:simulation}
reports simulation results, Section~\ref{sec:results} presents the case study
analysis, and Section~\ref{sec:discussion} concludes with a discussion.

\section{Motivating Application} \label{sec:motivating_app}

The proposed semiparametric GLM approach is motivated by analyses of the
National Heart, Lung, and Blood Institute (NHLBI) Lung Health Study (LHS)
\citep{hansel2009leptin}. We consider a binary genetic biomarker (SNP rs177852 in LHS), 
that is encoded as \(-1\) and \(1\). Our main endpoint is FEV1/FVC, the ratio 
of Forced Expiratory Volume in the First second (FEV1) to Forced Vital 
Capacity (FVC), measured after bronchodilator administration and used 
as an indicator of lung function. Patients were followed for five years, 
with up to five approximately annual visits during follow--up. To improve 
interpretability of the time--dependent effects, visit times are centered 
and rescaled to -0.5, -0.25, 0, 0.25, and 0.5. 

For the present analysis, we defined an analytic cohort with available
genotype information and complete measurements of the outcome and smoking--exposure 
covariates. We restricted the data to continuous smokers and excluded observations 
from visits occurring after a patient reported smoking cessation. 
We also required at least two
observed visits per patient so that within--patient dependence could be
estimated. These criteria yielded 2,562 patients and 11,196 observations.
Accordingly, the following summaries and analyses describe continuous smokers
with the required genetic, outcome, and covariate information.

Figure~\ref{fig:lung_health_eda} summarizes the longitudinal outcome profiles.
The thin gray lines represent individual FEV1/FVC trajectories across visits,
revealing substantial between--patient heterogeneity alongside noticeable
within--patient variation over time. The superimposed summary captures the
population--level trend: the median trajectory (solid line with points)
together with interquartile ranges (error bars) at each visit.
The box plot in panel (B) compares the empirical distribution of
FEV1/FVC at each annual visit across biomarker groups, and summarizes how the 
marginal outcome distribution at each visit changes over time.
A post-bronchodilator FEV1/FVC ratio below \(0.7\) 
is a spirometric criterion for persistent airflow obstruction and supports the 
diagnosis of Chronic Obstructive Pulmonary Disease (COPD) in the appropriate 
clinical context \citep{gold_spirometry}. Consequently, the probability of 
FEV1/FVC falling below \(0.7\) is of direct clinical relevance: it quantifies 
the risk that a patient with a given covariate profile exhibits airflow obstruction 
at a particular visit. Estimating this probability therefore provides a clinically 
interpretable summary of the outcome distribution beyond the marginal mean and 
facilitates comparisons of risk across patient characteristics and over time.

The longitudinal measurements exhibited strong within--patient dependence. 
To characterize serial dependence after accounting for the marginal mean 
structure, we fitted a working--independence beta regression model with visit, 
biomarker, and their interaction and computed Pearson residuals. The empirical 
correlations between residuals separated by one through four annual visits were 
0.928, 0.913, 0.894, and 0.873, respectively, indicating substantial residual 
dependence throughout follow--up as expected. Residual variability also increased 
over time, with the empirical residual variance increasing from 0.843 at the 
first visit to 1.28 at the last visit. Together, these features motivate a 
flexible longitudinal model that accommodates both within--patient dependence
and changes in the outcome distribution over time.


\begin{figure}[!htp]
\centering
\begin{subfigure}{0.49\textwidth}
\centering
\includegraphics[width=\linewidth]{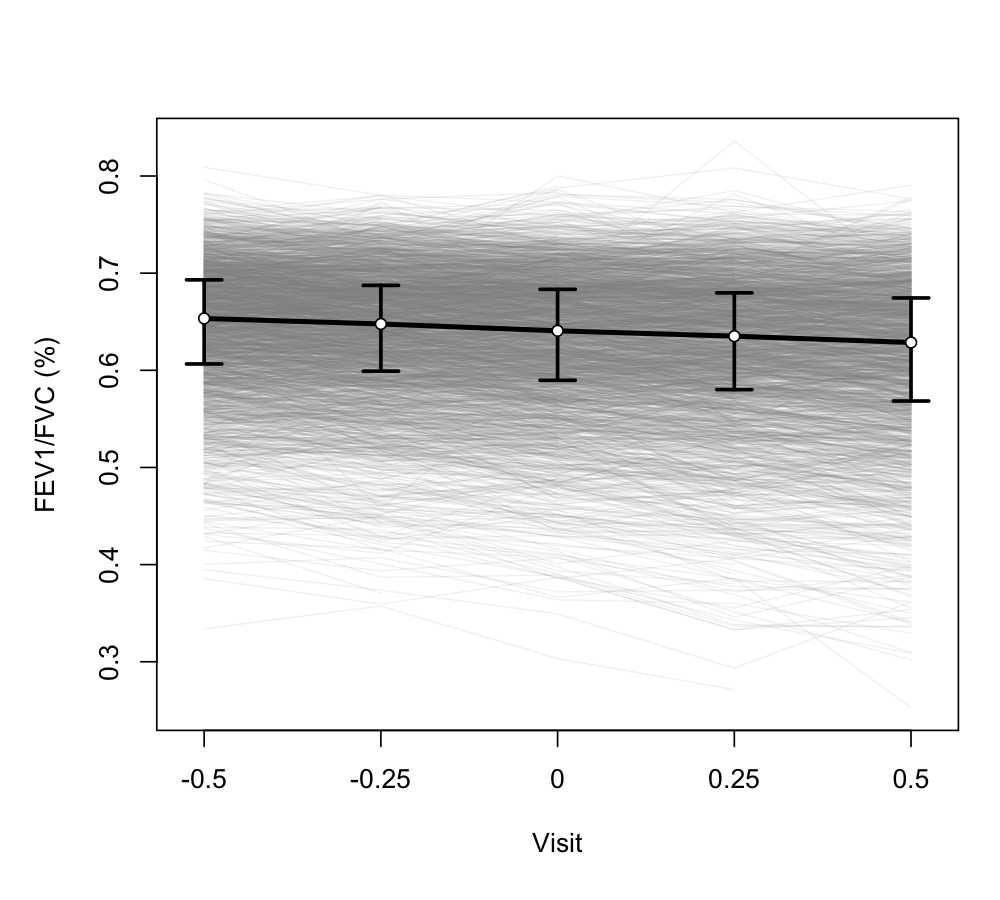}
\caption{Individual FEV1/FVC trajectories by visit.}
\end{subfigure}
\hfill
\begin{subfigure}{0.49\textwidth}
\centering
\includegraphics[width=\linewidth]{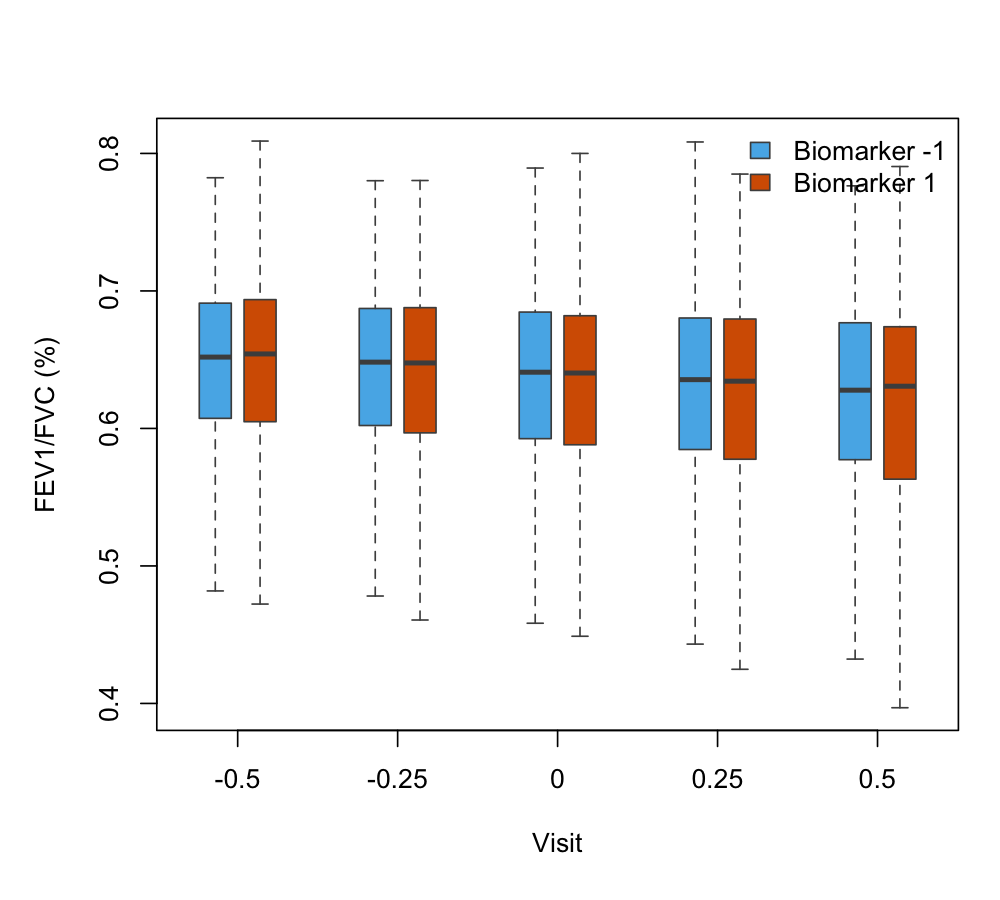}
\caption{Distribution of FEV1/FVC at each visit.}
\end{subfigure}
\caption{Lung Health Study exploratory summaries. Panel (A) shows individual
FEV1/FVC trajectories over visits, with the pointwise median and interquartile
range superimposed. Panel (B) summarizes the empirical FEV1/FVC distributions
by visit and biomarker group.}
\label{fig:lung_health_eda}
\end{figure}

\section{Model} \label{sec:model}

\subsection{DPGLM Marginal Model}

Let $y_{ij}$ denote the continuous outcome for unit $i=1,\ldots,n$ at visit
$j=1,\ldots,n_i$, with covariate vector $\bm{x}_{ij}\in\mathbbm{R}^p$.
Let $\bm{y}_i=(y_{i1},\ldots,y_{in_i})$, $N=\sum_{i=1}^n n_i$, and let
$\YY \subseteq \mathbb{R}$ denote the support of the outcome.
The goal is to model (i) flexible marginal outcome distributions with GLM--type
mean regression and (ii) within--unit dependence across repeated outcomes.
We first briefly review the (marginal) DPGLM for a generic fixed covariate
vector $\bx$. For observation $y = y_{ij}$, substitute $\bx = \bx_{ij}$.
Let $\lambda$ denote an unknown measure on $\YY$, referred to as the baseline measure. 
The DPGLM assumes $y_{ij}$ follows the model $p_{\bx}$ with
\begin{align} \label{eq:sampling_model}
p_{\bx}(y \mid \lambda,\bbeta)
=
\exp\{\theta_{\bx} y - b(\theta_{\bx};\lambda)\}\lambda(dy),
\end{align}
where
\begin{align}
b(\theta;\lambda)
=
\log\int_{\YY}\exp(\theta y)\lambda(dy)
\end{align}
is the log--normalizing constant and $\theta_{\bx}$ is a derived parameter 
\citep{wurm2018semiparametric}. As in an ordinary 
exponential--family GLM, $\theta_{\bx}$ is determined implicitly through the mean 
constraint
\begin{align}
b'(\theta_{\bx};\lambda)
=
\frac{\int_{\YY} y \exp(\theta y) \lambda(dy)}{\int_{\YY}  \exp(\theta y) \lambda(dy)}
=
\mathbb{E}(y\mid \lambda,\bbeta)
=
g^{-1}(\bx^{\top}\bbeta),
\end{align}
where $b'(\theta;\lambda)=\partial b(\theta;\lambda)/\partial \theta$
denotes the derivative of $b$ with respect to $\theta$, and $g$ is a 
user--specified link function making $\theta_{\bx} = \theta_{\bx}(\bbeta, \lambda)$ 
a function of $\bx$, $\bbeta$ and $\lambda$. 
As noted by \citet{rathouz2009generalized}, the model is not identified without
an additional constraint. For any $a\in\mathbb{R}$, define
$\widetilde{\theta}_{\bx} = \theta_{\bx}-a$ and 
$\widetilde{\lambda}(dy) \propto \exp(ay)\lambda(dy).$
Then $(\widetilde{\theta}_{\bx},\widetilde{\lambda})$ produces the same
$p_{\bx}$ in \eqref{eq:sampling_model}, and therefore the same mean, as
$(\theta_{\bx},\lambda)$. We therefore impose
the centering constraint
$\int_{\YY}y\lambda(dy) / \lambda(\YY)=\mu_0,$
where $ \mu_0=\frac{1}{N}\sum_{i}\sum_{j}y_{ij}.$
Equivalently, $b'(0;\lambda)=\mu_0$, so the constraint fixes the origin of
the exponential tilt. Roughly speaking, under this convention, the normalized measure
$\lambda(dy)/\lambda(\YY)$ corresponds to the outcome distribution at the centered 
covariate profile, namely, visit 0 and the midpoint between the two biomarker groups. 

The DPGLM continues the model construction with a nonparametric Bayesian prior
on $\lambda$, assuming a gamma completely random measure (CRM) \citep{kingman1993poisson},
\begin{align}\label{CRM}
\lambda &\sim \mathrm{Gamma\text{-}CRM}(\nu), \\
\nu(ds,dt) &= s^{-1}e^{-s}ds\,\alpha G_0(dt), \nonumber
\end{align}
where $\nu$ is the L\'evy intensity that characterizes the CRM including the
base measure $G_0$ and concentration parameter $\alpha$. We choose the gamma
CRM because, after normalization, it yields a Dirichlet--process prior
\citep{ferguson1973bayesian}, giving a familiar and computationally convenient
nonparametric prior for the baseline distribution. That is, the
normalized measure $\lambda(\cdot) / \lambda(\YY)$ has a Ferguson-Dirichlet 
process prior, $\text{DP}(\alpha, G_0)$. See \citet{lijoi2026ferguson} for a 
recent review. Finally, because $\lambda$ is almost surely discrete, 
the exponentially tilted distribution in \eqref{eq:sampling_model} is discrete. 
When the outcome $y_{ij}$ is continuous we add a convolution with a
continuous kernel by way of adding a discrete latent counterpart $\tilde y_{ij}$ 
and assign it the exponentially tilted distribution 
$p_{\bx}(\tilde y_{ij}\mid \lambda,\bbeta)$ in \eqref{eq:sampling_model}. We 
then generate the continuous outcome $y_{ij}$ from a kernel centered at $\tilde y_{ij}$:
\begin{align}\label{eq:kernel}
y_{ij}\mid \tilde y_{ij} \sim k(y_{ij}\mid \tilde y_{ij}).
\end{align}
Using a symmetric kernel \eqref{eq:kernel} and the tilted distribution 
\eqref{eq:sampling_model} we smooth the 
discrete random measure while preserving the DPGLM mean 
structure. For the observed design point $\bm{x}_{ij}$, let $f_{ij}$ and
$F_{ij}$ denote the marginal density and cumulative distribution function (CDF) 
of $y_{ij}$, respectively.
\begin{equation} \label{eq:fij}
f_{ij}(y\mid \lambda,\bbeta) =
\int_{\YY} k(y\mid \tilde{y})\,p_{\bx}(d \tilde{y}\mid \lambda,\bbeta)
\qquad \text{and} \qquad 
F_{ij}(y\mid \lambda,\bbeta) =
\int_{-\infty}^{y} f_{ij}(t\mid \lambda,\bbeta)\,dt.
\end{equation}
We complete the marginal specification with an independent prior $p(\bbeta)$ on 
the regression coefficients and the non-parametric Bayesian prior
\eqref{CRM} on $\lambda$.

\subsection{Copula Model for Within--Unit Dependence}

The marginal DPGLM defines flexible semiparametric GLM--style inference. Longitudinal 
data, however, require an additional component to account for within--subject dependence. 
We use a copula construction for this purpose. The resulting separation between 
the DPGLM for the marginal distribution and the copula for the within--subject 
dependence is particularly appealing because it preserves the semiparametric 
marginal interpretation of the regression coefficients while allowing flexible 
dependence modeling.

The copula construction is based on Sklar's theorem \citep{sklar1959fonctions}, 
which states that any multivariate distribution can be written in terms of its 
marginal distributions and a copula. Let
\begin{equation}\label{eq:uij}
u_{ij} = F_{ij}(y_{ij}\mid \lambda,\bbeta)
\tag{6a}
\end{equation}
denote the marginal c.d.f. under \eqref{eq:fij}.
For each unit $i$, we model dependence in
$\bm{u}_i=(u_{i1},\ldots,u_{in_i})$ by a parametric copula $C_{\rho}$ with
density $c_{\rho}$ as follows.
Let $F(\by_i \mid \ldots)$ denote the multivariate c.d.f. for
$\by_i$. By Sklar's theorem any joint distribution with given
marginals $f_{ij}$ -- as in \eqref{eq:fij} -- can be written as
$
F(\bm{y}_i\mid \lambda,\bbeta,\rho) = C_{\rho}(u_{i1},\ldots,u_{in_i}),
$
and therefore
$
f(\bm{y}_i\mid \lambda,\bbeta,\rho) = c_{\rho}(u_{i1},\ldots,u_{in_i})
\prod_{j=1}^{n_i} f_{ij}(y_{ij}\mid \lambda,\bbeta).
$
Here $\rho$ is the copula dependence parameter (or vector of
parameters) that indexes the copula function.
This construction leaves the DPGLM margins unchanged and adds a transparent
parameterization of within--subject dependence indexed by $\rho$. The model is 
completed with a prior $p(\rho)$.

Assuming conditional independence across units given
$(\lambda,\bbeta,\rho)$, the joint likelihood is
\begin{align}
L(\lambda,\bbeta,\rho;\bm{y})
=
\prod_{i=1}^n
\left[
c_{\rho}(u_{i1},\ldots,u_{in_i})
\prod_{j=1}^{n_i} f_{ij}(y_{ij}\mid \lambda,\bbeta)
\right],
\end{align}
implying the posterior model
\begin{align}\label{post}
p(\lambda,\bbeta,\rho\mid \bm{y})
\propto
L(\lambda,\bbeta,\rho;\bm{y})\,p(\lambda)p(\bbeta)p(\rho).
\end{align}
For posterior computation, we use the augmented joint model involving the 
latent variables $\tilde{\bm{y}}=(\tilde y_{ij})$:
\begin{align}\label{post-aug}
p(\bm{y},\tilde{\bm{y}},\lambda,\bbeta,\rho)
\propto
\prod_{i=1}^n \left[ c_{\rho}(u_{i1},\ldots,u_{in_i}) 
\prod_{j=1}^{n_i}
k(y_{ij}\mid \tilde y_{ij})
p_{\bm{x}_{ij}}(d\tilde y_{ij}\mid \lambda,\bbeta)\right]
\times
p(\lambda)p(\bbeta)p(\rho).
\end{align}
Here $p(\lambda)$ denotes the gamma--CRM prior on the baseline measure,
$p(\bbeta)$ denotes the regression prior, and $p(\rho)$ is the prior
on the copula dependence parameter. Since the CRM is an infinite--dimensional 
random measure, the CRM prior on $\lambda$ does not admit a finite--dimensional 
density representation, and neither does the corresponding posterior distribution. 
For posterior computation
and to interpret \eqref{post} and \eqref{post-aug},
we approximate the CRM by a finite truncation 
\citep{ferguson1972representation}, yielding a finite--dimensional approximation to 
the original model.

We implement posterior inference using Markov chain Monte Carlo (MCMC)
posterior simulation. To update $\lambda$ we use a Metropolis-Hastings transition
probability. The proposal distribution involves an intractable
normalizing constant, so we use an augmented Metropolis--Hastings update
with an auxiliary random measure. In the resulting augmented state space,
the intractable normalizing constants cancel from the acceptance ratio.
The proposal is sampled using auxiliary variables that decouple the CRM
jumps \citep{james2009posterior,catalano2024unified}. Full MCMC implementation
details are provided in Section~\ref{sec:supp-mcmc} of the Supplementary
Appendix.

\section{Simulation Study}
\label{sec:simulation}

We conducted a simulation study comparing the proposed CDPGLM with a Bayesian
beta mixed-effects model with a subject-specific random intercept (hereafter,
beta mixed model) and two generalized estimating equation (GEE) approaches
using exchangeable and independent working correlation structures. Because the
CDPGLM and GEE models target the marginal mean, the GEE fits provide the
primary benchmarks for regression-coefficient recovery. The beta mixed model's
fixed effects are subject-specific rather than marginal and, under the
nonlinear link, are not directly comparable to the marginal regression
coefficients; it therefore serves primarily as a parametric benchmark for
distributional and lower-tail inference through integration over the random
effect. The simulation was designed to evaluate 
the benefit of modeling the baseline distribution nonparametrically. Because 
the CDPGLM does not restrict the response to a fixed parametric family, it can 
adapt to heterogeneous populations with multimodal or otherwise irregular 
outcome distributions. Such flexibility is especially relevant when estimating 
lower-tail probabilities, which can be sensitive to distributional misspecification. 
We evaluated regression-coefficient bias and empirical standard deviation across 
replicates, together with average estimated uncertainty. Here, average estimated
uncertainty denotes the average posterior standard deviation for the Bayesian
methods and the average robust standard error for the GEE approaches. For the CDPGLM 
and beta mixed model, we also evaluated profile-specific fitted marginal
distributions using the Wasserstein--1 distance and lower-tail probabilities
at a prespecified cutoff. Data were generated using an
Exponential--Tilting (ET) model; further details are provided in Supplementary
Appendix Section~\ref{sec:supp-sim-settings}.

\subsection{Design}

We used a logit link with linear predictor
\begin{align}
\eta_{ij}
=
\beta_0
+
\beta_1 \text{Biomarker}_i
+
\beta_2 \text{Visit}_{ij}
+
\beta_3 \left(\text{Biomarker}_i \times \text{Visit}_{ij}\right),
\end{align}
where $\text{Biomarker}_i \in \{-1,1\}$ with equal probability and five visit times
were equally spaced on $[-0.5,0.5]$. We considered $n \in \{200,400,1000\}$ subjects,
five visits per subject, and 1000 replicated data sets for this configuration. Each 
scheduled observation  was independently missing with probability \(0.20\). We considered a single 
fixed--effect setting, with $\bbeta = (\beta_0, \beta_1, \beta_2, \beta_3)$ given by
\begin{align}
& \bbeta=(0.6, -0.3, -0.3, 0.2). \nonumber
\end{align}

Under the ET data generator, the simulation truth followed the CDPGLM
construction. Under this setting we set the normalized baseline distribution
$\lambda(\cdot)/\lambda(\YY)$ to be the empirical distribution of the LHS data, shown in
Figure~\ref{fig:et_empirical_grid}. Let $\YY$ denote the support of
the empirical distribution in the LHS.
For each observation, we computed the target mean
$\mu_{ij}=\mathrm{logit}^{-1}(\eta_{ij})$ and solved for the canonical
parameter $\theta_{ij}$. This produced a
covariate--specific discrete distribution $F_{ij}$ with weights
proportional to $\exp(\theta_{ij} t)\lambda(t)$. To induce within--subject
dependence, we generated a subject--specific vector of correlated uniforms from
an exchangeable Gaussian copula with $\rho = 0.80$ and mapped those uniforms
through the corresponding inverse CDFs to the observed outcomes
$y_{ij}$ using \eqref{eq:uij}.

For each replicated data set we fit the proposed CDPGLM, a Bayesian beta
regression model with random intercepts, and two GEE models with exchangeable
and independent working correlation structures for regression-coefficient
comparisons. For the CDPGLM, we used independent Gaussian priors $\beta_k \sim N(0,2^2)$ 
for the regression coefficients, a uniform prior over the admissible support 
of the exchangeable Gaussian copula correlation parameter $\rho$,
and total mass $\alpha=1$ and prior mean $G_0=U(0,1)$ for
the gamma CRM on $\lambda$ (i.e., the DP prior on the normalized
baseline $\lambda/\lambda(\YY)$).
For the MCMC transition probability to update the CRM we use (see the
appendix for details of the transition probability) $\delta=2$, $M = 20$
un--linked atoms (that is, atoms $m_\ell$ with $\ty_{ij} \neq m_\ell$ for 
all $i$, $j$, and $\ell = 1, \cdots, M$), 
and we adopted a uniform kernel for \eqref{eq:kernel} with bandwidth chosen by
Silverman's rule, where the interquartile range (IQR) enters through
$c_0 = 0.9 n^{-1/5} \min\{\mathrm{sd}(y), \mathrm{IQR}(y)/1.34\} $
\citep{silverman2018density}.
For the beta mixed model, fit in
\texttt{brms}, we used the same $N(0,2^2)$ prior on the regression
coefficients $\beta_k$, an $\mathrm{Exponential}(1.15)$ prior for the random--intercept
standard deviation, and a $\mathrm{Gamma}(1,0.18)$ prior for the beta
precision parameter. The resulting prior predictive
distributions are shown in Figure~\ref{fig:supp-prior-comparison}.
To fit the CDPGLM we used 1000 MCMC
iterations with 200 burn--in draws. The beta mixed benchmark was fit in
\texttt{brms} with one chain, 2000 iterations, and 500 warmup iterations \citep{burkner2017brms}.

At representative covariate profiles, we compared the fitted marginal
distributions with their true marginal distributions using the Wasserstein--1 distance.
Profile-specific fitted marginal distributions for the beta mixed model were obtained by 
integrating over the random effects using 30-point Gauss–Hermite quadrature.
To assess the downstream impact on lower-tail probabilities, we evaluated
the lower-tail probability at a prespecified threshold $c = 0.7$, 
$\Pr(Y_{\bm{x}}<c \mid \mathcal{D})$  where $\mathcal{D}$ denotes the observed 
data set. We considered six representative profiles, 
$\bm{x} = (\text{Biomarker},\text{Visit}) \in 
\{(-1,-0.5), (1,-0.5), (-1,0), (1,0), (-1,0.5),(1,0.5)\}$.

\subsection{Simulation Results}

Table~\ref{tab:sim_et_bias_sd_summary} reports bias and empirical standard
deviation across simulation replicates, along with average estimated uncertainty,
by coefficient, sample size, and method. In this setting,
bias is generally small and tends to decrease as the sample size increases
for the CDPGLM and GEE approaches. The beta mixed model is an exception,
retaining some coefficient-specific bias as the sample size increases.
Empirical variability and reported uncertainty summaries are broadly comparable
across sample sizes. Because the GEE approaches provide
the primary benchmarks for marginal regression recovery, the CDPGLM and GEE
results are the most directly comparable. Both GEE exchangeable and the CDPGLM
show generally small bias and similar empirical variability once the sample size 
reaches 1000 in this setting.
Panel C of Table~\ref{tab:sim_et_bias_sd_summary} shows that the relative 
empirical variability is close to one for the time-fixed biomarker coefficient, 
whereas the time-varying visit and biomarker-by-visit interaction coefficients 
exhibit lower variability relative to GEE independence. This pattern is 
consistent with the repeated-measures design, which provides more information 
for estimating within-subject time effects.

The models differed more clearly in their recovery of profile-specific
marginal distributions. Figure~\ref{fig:sim_et_marginal_distribution}
compares fitted marginal distributions with the truth using Wasserstein--1
error. The CDPGLM has smaller median errors across all six profiles at
$n=200$, $400$, and $1000$. Figure~\ref{fig:sim_et_tail_prob} shows 
corresponding lower-tail probabilities. At $n=200$, the beta mixed model 
is closer to the truth for the profiles 
$(\text{biomarker},\text{visit})=(1,-0.5)$ and $(1,0)$, and generally 
exhibits lower across-replicate variability. This small-sample advantage 
does not persist as the sample size increases. As $n$ increases, the CDPGLM 
estimates move toward the truth and become less variable, whereas the beta 
mixed model retains systematic discrepancies from the truth. By \(n=1000\), 
the CDPGLM is closer to the truth across the displayed profiles, with 
substantially reduced across-replicate variability. Taken together,
Table~\ref{tab:sim_et_bias_sd_summary} and
Figures~\ref{fig:sim_et_marginal_distribution} and
\ref{fig:sim_et_tail_prob} show broadly comparable regression recovery among
the methods targeting marginal effects, but a clearer large-sample advantage
for the CDPGLM over the beta mixed model in estimating profile-specific
marginal distributions and their lower-tail probabilities.

Supplementary Figure~\ref{fig:sim_et_within_patient_correlations} shows that 
the CDPGLM and exchangeable GEE generally recover the true within-patient 
response-scale correlations across biomarker profiles and visit pairs. The 
estimates become more concentrated around the truth as the sample size increases.
Supplementary Figure~\ref{fig:sim_et_regression_rmse} reports regression
root mean square error (RMSE) for the CDPGLM, beta mixed model, and both GEE
approaches. The beta mixed model results are included for completeness, but
its subject-specific regression coefficients target a different estimand from
the marginal coefficients of the CDPGLM and GEE models and should therefore
not be interpreted as a direct regression benchmark.

\begin{table}[!htp]
\centering
\caption{ET simulation results by sample size, coefficient, and method. Panel A
reports percent bias; Panel B reports the ratio of average estimated uncertainty
to the across-replicate standard deviation, where estimated uncertainty is the
posterior standard deviation for Bayesian methods and the robust standard error
for GEE methods; and Panel C reports the across-
replicate standard deviation relative to GEE independence. Values are rounded
to two decimal places.}
\label{tab:sim_et_bias_sd_summary}
\scriptsize
\setlength{\tabcolsep}{3pt}
\parbox{\textwidth}{\centering\normalsize\textbf{Panel A: Percent bias}}\par
\resizebox{\textwidth}{!}{%
\begin{tabular}{ll*{4}{>{\raggedleft\arraybackslash}p{2.6cm}}}
\toprule
Method & $n$ & Intercept & Biomarker & Visit & \makebox[2.6cm][c]{Biomarker $\times$ visit} \\
\midrule
\multirow{3}{*}{CDPGLM} & 200 & 1.92 & -1.90 & -1.06 & -0.79 \\
 & 400 & 0.95 & -1.52 & -0.35 & -0.48 \\
 & 1000 & 0.29 & -0.70 & -0.10 & -0.05 \\
\addlinespace
\multirow{3}{*}{Beta mixed} & 200 & 1.19 & -0.05 & 0.83 & 0.12 \\
 & 400 & 1.20 & 0.12 & 1.02 & -0.21 \\
 & 1000 & 1.15 & 0.10 & 0.93 & -0.12 \\
\addlinespace
\multirow{3}{*}{\shortstack{GEE\\exchangeable}} & 200 & 0.05 & -0.04 & 0.03 & 0.27 \\
 & 400 & 0.02 & 0.09 & 0.13 & -0.12 \\
 & 1000 & -0.05 & 0.06 & 0.02 & -0.05 \\
\addlinespace
\multirow{3}{*}{\shortstack{GEE\\independent}} & 200 & 0.01 & 0.04 & 0.09 & 0.07 \\
 & 400 & 0.00 & 0.14 & -0.04 & 0.03 \\
 & 1000 & -0.06 & 0.10 & 0.14 & -0.11 \\
\bottomrule
\end{tabular}%
}
\parbox{\textwidth}{\centering\normalsize\textbf{Panel B: Average estimated uncertainty divided by across-replicate SD}}\par
\resizebox{\textwidth}{!}{%
\begin{tabular}{ll*{4}{>{\raggedleft\arraybackslash}p{2.6cm}}}
\toprule
Method & $n$ & Intercept & Biomarker & Visit & \makebox[2.6cm][c]{Biomarker $\times$ visit} \\
\midrule
\multirow{3}{*}{CDPGLM} & 200 & 1.00 & 1.01 & 1.09 & 1.09 \\
 & 400 & 0.96 & 1.02 & 1.08 & 1.11 \\
 & 1000 & 0.97 & 1.01 & 0.96 & 1.05 \\
\addlinespace
\multirow{3}{*}{Beta mixed} & 200 & 1.00 & 1.00 & 1.04 & 1.03 \\
 & 400 & 1.00 & 0.98 & 1.01 & 1.04 \\
 & 1000 & 0.99 & 1.00 & 0.93 & 0.98 \\
\addlinespace
\multirow{3}{*}{\shortstack{GEE\\exchangeable}} & 200 & 0.99 & 1.00 & 1.00 & 0.99 \\
 & 400 & 1.01 & 0.98 & 1.00 & 1.03 \\
 & 1000 & 0.98 & 1.00 & 0.95 & 0.99 \\
\addlinespace
\multirow{3}{*}{\shortstack{GEE\\independent}} & 200 & 0.99 & 0.99 & 0.97 & 0.97 \\
 & 400 & 1.00 & 0.99 & 1.01 & 1.04 \\
 & 1000 & 0.98 & 1.01 & 0.96 & 0.95 \\
\bottomrule
\end{tabular}%
}
\parbox{\textwidth}{\centering\normalsize\textbf{Panel C: Across-replicate SD relative to GEE independence}}\par
\resizebox{\textwidth}{!}{%
\begin{tabular}{ll*{4}{>{\raggedleft\arraybackslash}p{2.6cm}}}
\toprule
Method & $n$ & Intercept & Biomarker & Visit & \makebox[2.6cm][c]{Biomarker $\times$ visit} \\
\midrule
\multirow{3}{*}{CDPGLM} & 200 & 1.11 & 0.99 & 0.83 & 0.82 \\
 & 400 & 1.08 & 0.94 & 0.78 & 0.79 \\
 & 1000 & 1.05 & 0.96 & 0.79 & 0.73 \\
\addlinespace
\multirow{3}{*}{Beta mixed} & 200 & 1.00 & 0.99 & 0.78 & 0.78 \\
 & 400 & 0.99 & 1.00 & 0.81 & 0.82 \\
 & 1000 & 0.99 & 1.01 & 0.81 & 0.77 \\
\addlinespace
\multirow{3}{*}{\shortstack{GEE\\exchangeable}} & 200 & 0.98 & 0.98 & 0.77 & 0.77 \\
 & 400 & 0.97 & 0.98 & 0.80 & 0.81 \\
 & 1000 & 0.98 & 0.99 & 0.80 & 0.76 \\
\bottomrule
\end{tabular}%
}
\end{table}

\begin{figure}[!htp]
\centering
\includegraphics[width=\textwidth]{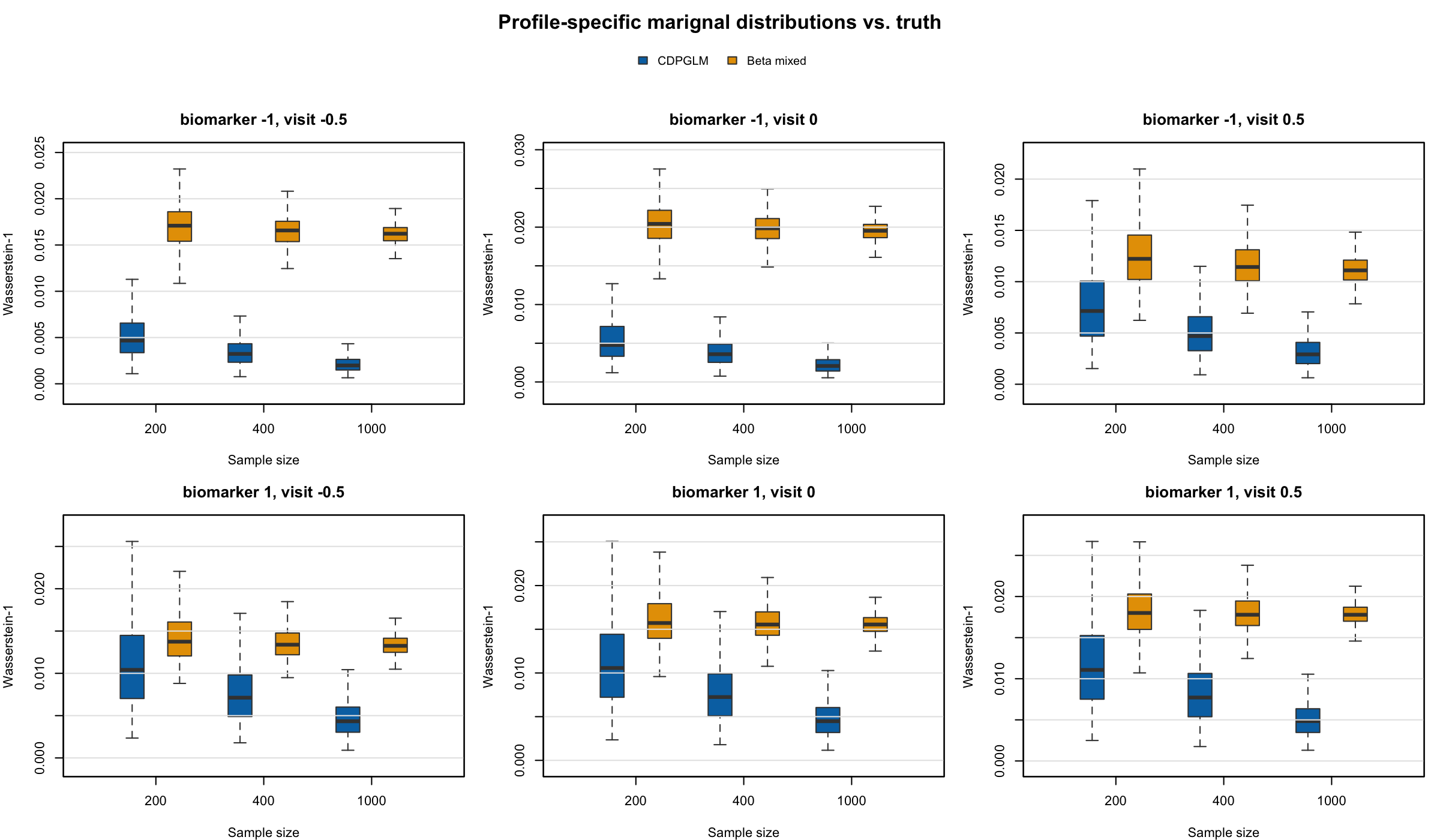}
\caption{Wasserstein--1 errors for fitted marginal
distributions at representative profiles, by biomarker (row), visit (column), 
and sample size (x-axis).}
\label{fig:sim_et_marginal_distribution}
\end{figure}

\begin{figure}[!htp]
\centering
\includegraphics[width=\textwidth]{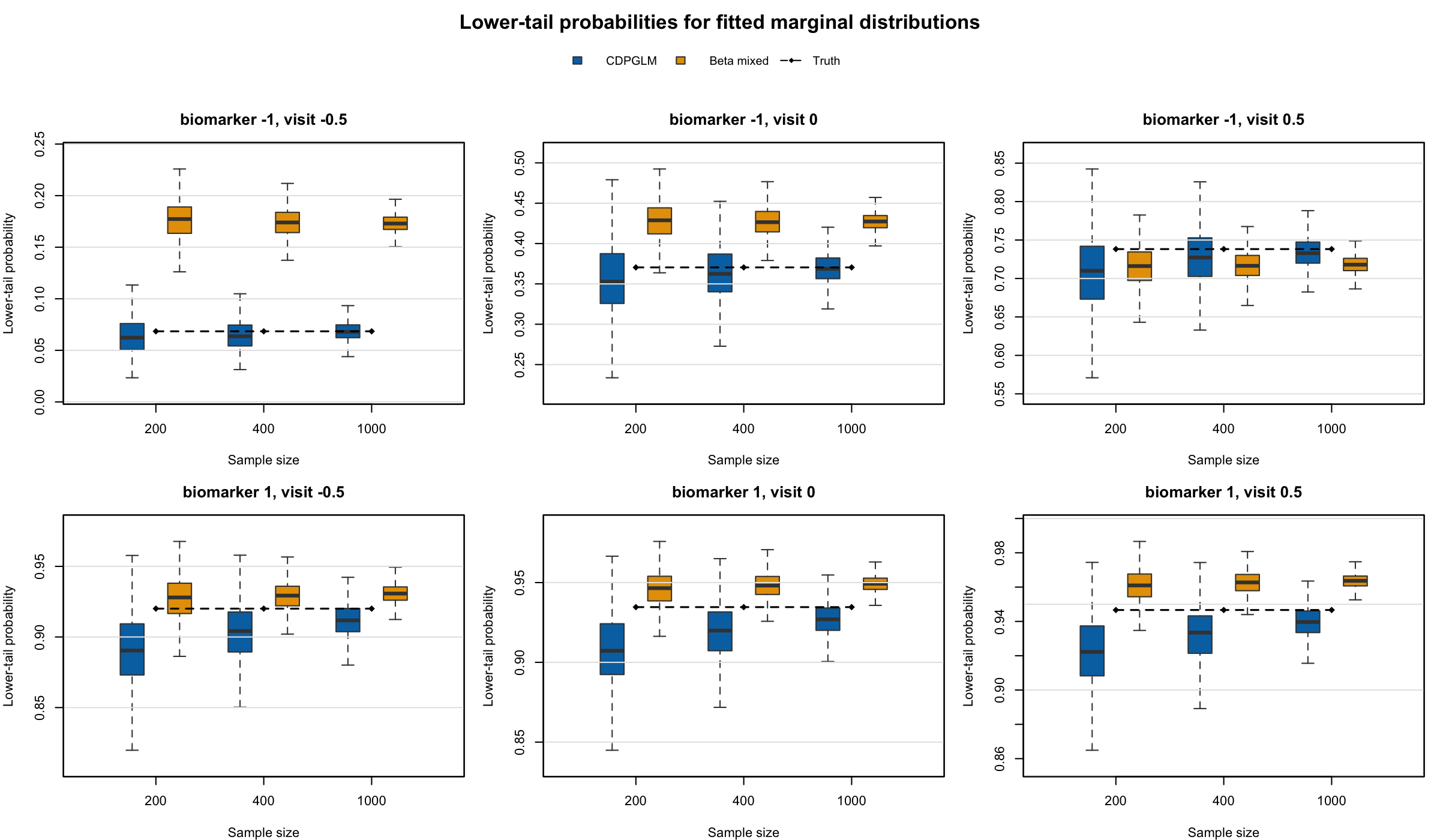}
\caption{Estimated lower-tail probabilities from the fitted marginal
distributions, $\Pr(Y_{\bm{x}}<0.7\mid\mathcal{D})$, at representative profiles, by biomarker (row), visit (column), 
and sample size (x-axis); dashed lines mark the truth.}
\label{fig:sim_et_tail_prob}
\end{figure}

\section{Results -- Lung Health Study}
\label{sec:results}

We fitted the CDPGLM to the Lung Health Study data introduced in
Section~\ref{sec:motivating_app} and compared it with a beta mixed--effects
model with a subject--specific random intercept. Both models
used the same fixed--effect structure, including biomarker, visit, age, sex,
BMI, pack-years, study-site indicators, and the biomarker--by--visit
interaction. The beta mixed model uses a parametric beta response distribution
and accounts for within--subject dependence through a random intercept. The
CDPGLM combines a nonparametric marginal model with a copula dependence
structure. Additional case-study and
covariate details are provided in Section~\ref{sec:supp-case-study} and
Tables~\ref{tab:supp-case-study-covariates}--\ref{tab:supp-case-study-site} of
the supplementary materials.

Table~\ref{tab:case_study_summary} reports selected posterior summaries. With
the logit link, the CDPGLM coefficients describe changes in the logit of
the marginal expected outcome associated with one-unit covariate changes,
holding the other covariates fixed. The beta mixed-model coefficients instead
describe subject-specific changes conditional on the random intercept and are
therefore not directly comparable in magnitude to the CDPGLM coefficients.
Because visit was scaled from $-0.5$ to $0.5$, a one-unit increase spans the
observed follow-up period. Both models estimated a negative visit effect and a
small negative biomarker--by--visit interaction, indicating qualitative
agreement in the direction of the temporal trend and its modification by
biomarker. Because the coefficient magnitudes target different estimands, they
should not be interpreted as estimates of the same effect. Full coefficient
summaries are reported in Supplementary
Table~\ref{tab:supp-case-study-coefficients}.

\begin{table}[!htp]
\centering
\begin{threeparttable}
\caption{Lung Health Study: selected posterior summaries for regression
parameters and lower-tail probabilities under two alternative models.}
\label{tab:case_study_summary}
\footnotesize
\setlength{\tabcolsep}{3pt}
\begin{tabular}{lcccccc}
\hline
Method & Intercept & Biomarker & Visit & Biomarker $\times$ visit
& $D_{-1}$ & $D_{1}$ \\
\hline
Beta mixed &
0.463 (0.018) & -0.007 (0.005) & -0.129 (0.003) & -0.009 (0.003) & 0.926 (0.009) & 0.936 (0.008) \\
CDPGLM &
0.476 (0.017) & -0.005 (0.005) & -0.122 (0.003) & -0.010 (0.003) & 0.905 (0.008) & 0.913 (0.007) \\
\hline
\end{tabular}
\begin{tablenotes}
\item Entries are posterior mean (posterior standard deviation). Here
$D_{-1}=\Pr(Y < 0.7 \mid \mathrm{biomarker}=-1,\mathrm{visit}=0.5,\mathcal{D})$ and
$D_{1}=\Pr(Y < 0.7 \mid
\mathrm{biomarker}=1,\mathrm{visit}=0.5,\mathcal{D})$ are lower-tail
probabilities.
\end{tablenotes}
\end{threeparttable}
\end{table}

We next assessed the fitted marginal distributions. For each CDPGLM posterior
draw, we evaluated the covariate-specific tilted density and then averaged the
density curves across post--burn--in draws. For the beta
mixed model, we integrated numerically over the random-intercept distribution
to obtain the marginal density. Figure~\ref{fig:case_study_density_comparison}
compares these fitted densities with the empirical densities at visit $0.5$.
For both biomarker values, the beta mixed density is shifted to the left of the
empirical primary mode, whereas the CDPGLM better reproduces the empirical
shape and aligns more closely with the location of the primary mode.

The last two columns of Table~\ref{tab:case_study_summary} report lower-tail
probabilities at the clinically relevant threshold $0.7$. For biomarker values
$-1$ and $1$, the beta mixed model estimated probabilities of $0.926$ and
$0.936$, whereas the CDPGLM estimated $0.905$ and $0.913$, respectively. The
CDPGLM estimates were lower for both biomarker profiles. Overall, the two
models produced broadly similar
mean-regression results but discernible differences in the fitted marginal
distributions and associated lower-tail probabilities. The CDPGLM combines a
flexible marginal distribution with explicit modeling of within--subject
dependence.

\begin{figure}[!htp]
\centering
\begin{subfigure}{0.49\textwidth}
\centering
\includegraphics[width=\linewidth]{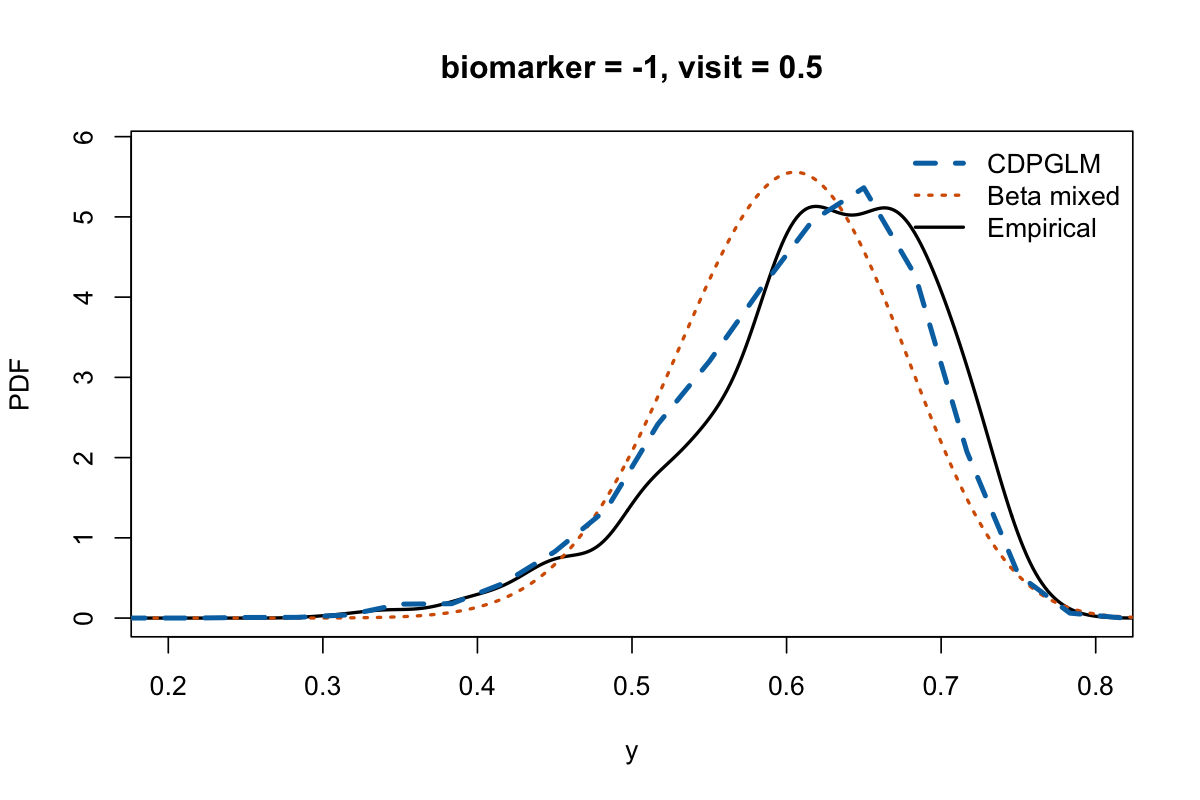}
\caption{Biomarker $-1$ at visit 0.5.}
\end{subfigure}
\hfill
\begin{subfigure}{0.49\textwidth}
\centering
\includegraphics[width=\linewidth]{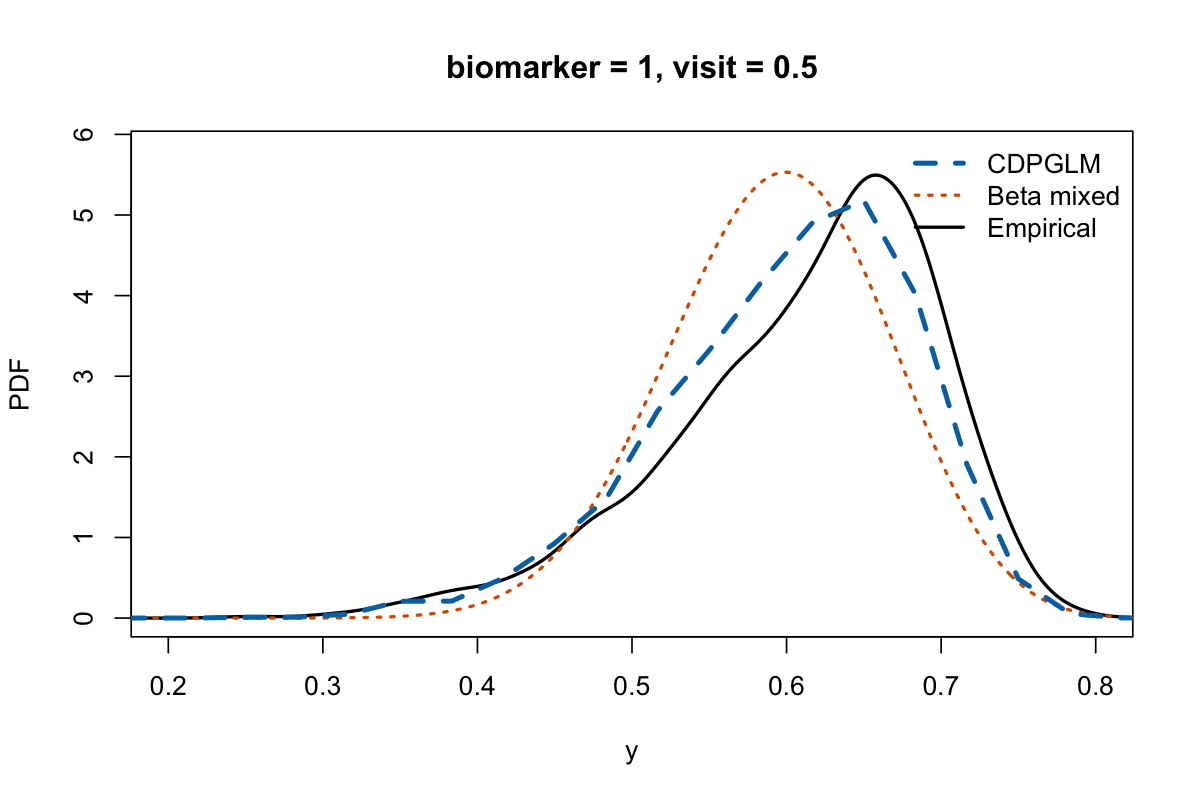}
\caption{Biomarker $1$ at visit 0.5.}
\end{subfigure}
\caption{Lung Health Study fitted density comparisons. Panels (A) and (B)
compare the fitted densities under the beta mixed model and CDPGLM with
the empirical probability density function at visit 0.5 for biomarker values
$-1$ and $1$, respectively.}
\label{fig:case_study_density_comparison}
\end{figure}

\section{Discussion}
\label{sec:discussion}

We proposed CDPGLM as a copula--based extension of DPGLM for longitudinal
continuous outcomes. The central idea is to retain the semiparametric marginal
GLM structure of the DPGLM while using a copula to model within--subject
dependence. This separation is important: the copula captures serial
association, while the DPGLM component continues to
define the marginal density, regression mean, and any summary of interest such as 
lower-tail probabilities.
That is, dependence can be added without changing the marginal
interpretation of the regression model.

The simulation results clarify where this flexibility matters. Under the ET
generator, the CDPGLM and GEE approaches showed broadly similar recovery of
marginal regression effects, whereas
the beta mixed model targets subject-specific rather than marginal
coefficients and is therefore not a direct regression benchmark. The fitted
marginal distributions nevertheless differed substantially between the CDPGLM
and beta mixed model. This gap
between marginal regression and distributional inference under a flexible
baseline, on the one hand, and the subject-specific, parametric beta mixed
model, on the other, motivates the CDPGLM. The finding is also consistent with
the asymptotic orthogonality between the mean--model parameters and the
baseline distribution in semiparametric GLMs \citep{huang2017orthogonality}:
regression estimation can remain accurate even when distributional recovery
differs between models. The benefit of the nonparametric baseline lies in
inference for functionals of the baseline distribution, particularly
lower-tail probabilities.

Several limitations and opportunities for future work remain. First, the 
dependence structure is prespecified. In this paper, we used an exchangeable 
Gaussian copula as a parsimonious representation of within--subject dependence. 
Although the marginal model is specified separately from the dependence 
structure, misspecification of the latter may still affect uncertainty 
quantification. An important extension is therefore to consider more flexible 
or data-adaptive dependence models, including AR(1) and unstructured correlation 
structures as well as alternative copula families, and to systematically investigate 
the robustness of posterior uncertainty to dependence misspecification.

Second, the present study focused on bounded continuous outcomes and a particular 
specification of the baseline distribution model. Future work could extend the 
CDPGLM framework to other outcome types, including discrete and mixed outcomes, 
and investigate sensitivity to modeling choices such as the smoothing kernel 
and baseline prior. Developing principled approaches for selecting or averaging 
over alternative dependence models would also be useful.

Third, the current framework assumes that subjects are independently sampled 
from the target population. Extending CDPGLM to more complex sampling designs is 
another important direction. One example is outcome-dependent sampling, in which 
inclusion probabilities depend on the observed outcome. In such settings, the 
sampling mechanism would need to be incorporated into the likelihood or otherwise 
accounted for to obtain valid population-level inference for the marginal 
distribution and its derived functionals.

\newpage\clearpage
\bibliographystyle{apalike}
\bibliography{cdpglm}

\section*{Acknowledgments}
With sadness we note that one of the authors, Paul Rathouz, passed away 
in December 2025. Paul was a wonderful and highly collegial member of our 
scientific community and is dearly missed. We dedicate this article to his memory.

We have no conflicts of interest to
declare.
We used artificial intelligence (AI)-based tools to assist with language polishing and editorial
refinement of the manuscript.

\section*{Supplementary Materials}
Web appendices, additional figures, and the R code for the simulation study
and data analysis are available with this paper at the Biometrics website on
Oxford Academic. These materials include the full simulation outputs summarized
in Section~\ref{sec:simulation}, along with scripts for reproducing the
reported tables and figures. The supplementary appendix contains the MCMC and
implementation details in Section~\ref{sec:supp-mcmc}, a prior predictive
comparison in Section~\ref{sec:supp-prior-comparison}, and expanded simulation
settings and additional ET diagnostics in
Section~\ref{sec:supp-sim-settings}.

\section*{Data Availability Statement}
The case--study data used in the application are not publicly available. This
does not affect the validity of the paper, because the methodological
development and simulation results do not depend on public release of the
application data. The baseline--grid mass used in the simulation will be shared
in the Supporting Information.
\clearpage
\begin{center}
{\Large\bfseries Supplementary Appendix}
\end{center}

\renewcommand{\thesection}{S.\arabic{section}}
\renewcommand{\thesubsection}{S.\arabic{section}.\arabic{subsection}}
\renewcommand{\thesubsubsection}{S.\arabic{section}.\arabic{subsection}.\arabic{subsubsection}}
\renewcommand{\theHsection}{supp.\arabic{section}}
\renewcommand{\theHsubsection}{supp.\arabic{section}.\arabic{subsection}}
\renewcommand{\theHsubsubsection}{supp.\arabic{section}.\arabic{subsection}.\arabic{subsubsection}}
\setcounter{section}{0}
\setcounter{table}{0}
\renewcommand{\thetable}{S.\arabic{table}}
\renewcommand{\theHtable}{supp.\arabic{table}}
\setcounter{equation}{0}
\renewcommand{\theequation}{S.\arabic{equation}}
\renewcommand{\theHequation}{supp.\arabic{equation}}
\setcounter{figure}{0}
\renewcommand{\thefigure}{S.\arabic{figure}}
\renewcommand{\theHfigure}{supp.\arabic{figure}}
This supplement collects additional material referenced in the main manuscript,
including Markov chain Monte Carlo (MCMC) and implementation details, a prior
predictive comparison of the CDPGLM and beta mixed model, an expanded
description of the simulation design, and additional simulation results for
the ET generator.

\section{MCMC and Implementation Details}
\label{sec:supp-mcmc}

\subsection{Notation}

The marginal and copula models are defined in Section~\ref{sec:model} of the
main manuscript. For the MCMC description below, we only record the posterior
kernels that are updated by the sampler. Let
\begin{equation}
    \label{eq:supp-bth}
    b(\theta_{ij};\lambda)
    \equiv
    \log \int_{\YY} \exp(\theta_{ij} \ty)\lambda(d\ty),
\end{equation}
and
\begin{equation}
    \label{eq:supp-p_x}
    p_{\bx}(\ty_{ij} \mid \lambda, \bbeta)
    =
    \exp\{\theta_{ij} \ty_{ij} - b(\theta_{ij};\lambda)\}\lambda(d\ty_{ij}),
\end{equation}
where $\theta_{ij}$ is the canonical parameter determined by the mean constraint
in the main text. The observed--data marginal density is
\begin{equation}
    \label{eq:supp-marginal-y}
    f_{ij}(y_{ij} \mid \lambda, \bbeta)
    =
    \int k(y_{ij} \mid \ty_{ij}) p_{\bx}(\ty_{ij} \mid \lambda, \bbeta) d\ty_{ij},
\end{equation}
with corresponding cumulative distribution function (CDF) $F_{ij}$. For each unit, define
$F_j = F_{ij}(y_{ij} \mid \lambda, \bbeta)$ and
\begin{equation}
    \label{eq:supp-copula-density}
    c(\boldsymbol{y}_i, \lambda, \bbeta, \rho)
    \equiv
    \frac{\partial^{n_i} C_{\rho}}{\partial F_1 \cdots \partial F_{n_i}}.
\end{equation}
Here, the notation $c(\boldsymbol{y}_i, \lambda, \bbeta, \rho)$ differs from 
$c_{\rho}(\cdots)$ in the main text to make its dependence on $\lambda$, 
$\bbeta$, and $\rho$ explicit.
The posterior kernel based on the observed outcomes is then
\begin{equation}
    \label{eq:supp-joint-simple}
    p(\by, \lambda, \bbeta, \rho)
    \propto
    \left\{
        \prod_i
        \left[
            c(\boldsymbol{y}_i, \lambda, \bbeta, \rho)
            \prod_j
            \int k(y_{ij} \mid \ty_{ij})
            p_{\bx}(\ty_{ij} \mid \lambda, \bbeta) d\ty_{ij}
        \right]
    \right\}
    p(\lambda)p(\bbeta)p(\rho),
\end{equation}
and augmenting with latent variables $\bty$ gives
\begin{equation}
    \label{eq:supp-joint-latent}
    p(\by, \bty, \lambda, \bbeta, \rho)
    \propto
    \left\{
        \prod_{i = 1}^n
        \left[
            c(\boldsymbol{y}_i, \lambda, \bbeta, \rho)
            \prod_{j = 1}^{n_i}
            k(y_{ij} \mid \ty_{ij})
            p_{\bx}(\ty_{ij} \mid \lambda, \bbeta) 
        \right]
    \right\}
    p(\lambda)p(\bbeta)p(\rho).
\end{equation}
For computation, it is convenient to combine the copula contribution and the
tilted density through
\begin{equation} \label{eq:tilte_factor}
    \exp\left[
        \theta_{ij} \ty_{ij}
        +
        \frac{1}{n_i}
        \log\left\{
            c(\by_i, \lambda, \bbeta, \rho)
        \right\}
    \right]
    \equiv
    \exp\left[
        \theta_{ij} \ty_{ij}
        +
        h(\by_i, \lambda, \bbeta, \rho)
    \right].
\end{equation}

\subsection{MCMC Algorithm}
\label{sec:supp-mcmc-updates}
Details of the MCMC algorithm and associated results
follow the development in \citet{alam2025dpglm}. The main difference in the
present MCMC implementation is the additional term $h$ in the tilting factor
in \eqref{eq:tilte_factor}, arising from the copula component. In addition,
when updating the random measure $\lambda$, we introduce an augmented conditional 
target distribution to improve the acceptance rate. The Markov chain cycles 
through four sets of transition probabilities:
\begin{enumerate}
    \item update the regression coefficients $\bbeta$;
    \item update the latent variables $\bty$;
    \item update the copula parameter $\rho$;
    \item update the random measure $\lambda$.
\end{enumerate}
The first three blocks are standard conditional updates and are summarized in
Section~\ref{sec:supp-standard-mcmc-blocks}. The fourth block,
which updates the CRM, is described separately in
Section~\ref{sec:supp-lambda-update}.

\subsection{Standard MCMC blocks}
\label{sec:supp-standard-mcmc-blocks}

\paragraph{Update $\bbeta$.}
The conditional target posterior distribution for $\bbeta$ is
\begin{equation}
    \begin{aligned}
        \pi(\bbeta \mid \by, \bty, \lambda, \rho)
        \propto &
        \left\{
            \prod_{i = 1}^n
            \left[
                c(\boldsymbol{y}_i, \lambda, \bbeta, \rho)
                \prod_{j = 1}^{n_i}
                p_{\bx}(\ty_{ij} \mid \lambda, \bbeta)
            \right]
        \right\}
        p(\bbeta) \\
        = &
        \left\{
            \prod_{i = 1}^n \prod_{j = 1}^{n_i}
            \exp\left[
                \theta_{ij} \ty_{ij}
                +
                h(\by_i, \lambda, \bbeta, \rho)
                -
                b(\theta_{ij};\lambda)
            \right]
            \lambda(\ty_{ij})
        \right\}
        p(\bbeta),
    \end{aligned}
\end{equation}
where $\theta_{ij}$ depends on $\bbeta$, $\bm{x}_{ij}$, and $\lambda$.

\paragraph{Update $\bty$.}
Using \eqref{eq:supp-joint-latent}, we find the conditional posterior distribution 
for a single $\ty_{ij}$
\begin{equation}
    \pi(\ty_{ij} \mid \by, \lambda, \bbeta, \rho)
    \propto
    k(y_{ij} \mid \ty_{ij}) p_{\bx}(\ty_{ij} \mid \lambda, \bbeta).
\end{equation}

\paragraph{Update $\rho$.}
Using \eqref{eq:supp-joint-simple},
\begin{equation}
    \pi(\rho \mid \by, \lambda, \bbeta)
    \propto
    \prod_i f(\boldsymbol{y}_i \mid \lambda, \bbeta, \rho) p(\rho)
    \propto
    \prod_i c(\boldsymbol{y}_i, \lambda, \bbeta, \rho) p(\rho).
\end{equation}

\subsection{\texorpdfstring{$\lambda$}{lambda} Update}
\label{sec:supp-lambda-update}

The conditional target posterior distribution of $\lambda$ is given by
\begin{equation} \label{eq:supp-poster-crm}
    \pi(\lambda \mid \by, \bty, \bbeta, \rho)
    \propto
    \left\{
        \prod_i \prod_j
        \exp\left[
            \theta_{ij} \ty_{ij}
            +
            h(\by_i, \lambda, \bbeta, \rho)
            -
            b(\theta_{ij};\lambda)
        \right]
        \lambda(\ty_{ij})
    \right\}
    p(\lambda).
\end{equation}
Define
\begin{equation*}
    h_{ij}(\lambda)
    \equiv
    h(\by_i, \lambda, \bbeta, \rho),
    \qquad
    \bxi(\lambda)
    \equiv
    \{\theta_{ij}(\lambda), h_{ij}(\lambda) : i = 1, \ldots, n,\;
    j = 1, \ldots, n_i\}.
\end{equation*}
For any random measure $\lambda$ and \textbf{fixed} $\bxi$, define the
following auxiliary distribution:
\begin{equation} \label{eq:supp-poster-crm-new}
    \Pi(\lambda \mid \by, \bty, \bbeta, \rho, \bxi) =  
    \left\{
        \prod_i \prod_j
        \exp\left[
            \theta_{ij} \ty_{ij}
            +
            h_{ij}
            -
            b(\theta_{ij};\lambda)
        \right]
        \lambda(\ty_{ij})
    \right\}
    p(\lambda) \cdot \frac{1}{C(\bxi)},
\end{equation}
where $C(\bxi)$ is the normalizing constant.
For simplicity, we suppress the conditioning variables $\by, \bty, \bbeta,$
and $\rho$ in \eqref{eq:supp-poster-crm} and
\eqref{eq:supp-poster-crm-new}, writing the two distributions as
$\pi(\lambda)$ and $\Pi(\lambda \mid \bxi)$, respectively. In what follows,
the subscript $a$ denotes augmented quantities, and the subscript $\star$
denotes proposed quantities.
We augment the target with an additional random measure $\lama$ whose
distribution is specified using the fixed reference value
$\wbxi = \{\widetilde{\theta}_{ij}, \widetilde{h}_{ij}\}$. The augmented
target is therefore
\begin{equation} \label{eq:supp-crm-target}
    \pi(\lambda, \lama) \propto \pi(\lambda) \Pi(\lama \mid \wbxi)
\end{equation}
For the proposed random measures $\lams$ and $\lamsa$, the proposal
distribution is
\begin{equation} \label{eq:supp-crm-prop}
    \begin{aligned}
        q(\lams, \lamsa \mid \lambda, \lama) = \Pi(\lams \mid \bxi) \Pi(\lamsa \mid \bxi^\star),
    \end{aligned}
\end{equation}
where $\bxi^\star \equiv \bxi(\lams) = \{\theta_{ij}(\lams), h_{ij}(\lams)\}$.
Accordingly, define $\theta_{ij}^\star \equiv \theta_{ij}(\lams)$ and
$h_{ij}^\star \equiv h_{ij}(\lams)$.

Combining the augmented target in \eqref{eq:supp-crm-target} with the proposal
distribution in \eqref{eq:supp-crm-prop}, the Metropolis--Hastings acceptance
ratio is obtained as follows. The third equality substitutes
\eqref{eq:supp-poster-crm} and \eqref{eq:supp-poster-crm-new}; the final
equality cancels the prior factors, CRM mass terms, and normalizing constants.
\begin{equation*}
\begin{aligned}
    r = & \frac{\pi(\lams, \lamsa)}
    {\pi(\lambda, \lama)} 
    \times
    \frac{q(\lambda, \lama \mid \lams, \lamsa)}
    {q(\lams, \lamsa \mid \lambda, \lama)} \\
    = & \frac{\pi(\lams) \cdot \Pi(\lamsa \mid \wbxi)}
    {\pi(\lambda) \cdot \Pi(\lama \mid \wbxi)} 
    \times
    \frac{\Pi(\lambda \mid \bxi^\star) \cdot \Pi(\lama \mid \bxi)}
    {\Pi(\lams \mid \bxi) \cdot \Pi(\lamsa \mid \bxi^\star)} \\
    = & \prod_i\prod_j \left\{
        \frac{\exp[\theta_{ij}^\star\ty_{ij}+h_{ij}^\star-b(\theta_{ij}^\star;\lams)]}
             {\exp[\theta_{ij}\ty_{ij}+h_{ij}-b(\theta_{ij};\lambda)]}
        \frac{\exp[\widetilde{\theta}_{ij}\ty_{ij}+\widetilde h_{ij}-b(\widetilde{\theta}_{ij};\lamsa)]}
             {\exp[\widetilde{\theta}_{ij}\ty_{ij}+\widetilde h_{ij}-b(\widetilde{\theta}_{ij};\lama)]}
        \right.\\
    & \qquad \quad\times \left.
        \frac{\exp[\theta_{ij}^\star\ty_{ij}+h_{ij}^\star-b(\theta_{ij}^\star;\lambda)]}
             {\exp[\theta_{ij}\ty_{ij}+h_{ij}-b(\theta_{ij};\lams)]}
        \frac{\exp[\theta_{ij}\ty_{ij}+h_{ij}-b(\theta_{ij};\lama)]}
             {\exp[\theta_{ij}^\star\ty_{ij}+h_{ij}^\star-b(\theta_{ij}^\star;\lamsa)]}
        \right\}\\
    = & \prod_i\prod_j \exp\!\left\{ 
        (\theta_{ij}^\star-\theta_{ij})\ty_{ij}
        +h_{ij}^\star-h_{ij} 
        -b(\theta_{ij}^\star;\lams) +b(\theta_{ij};\lambda)
        -b(\widetilde{\theta}_{ij};\lamsa)+b(\widetilde{\theta}_{ij};\lama)
        \right.\\
    & \qquad \quad \left.
        -b(\theta_{ij}^\star;\lambda)+b(\theta_{ij};\lams)
        -b(\theta_{ij};\lama) +b(\theta_{ij}^\star;\lamsa)
        \right\}.
\end{aligned}
\end{equation*}

The proposal is generated sequentially: first draw \(\lams\sim\Pi(\cdot\mid\bxi)\), 
then set \(\bxi^\star=\bxi(\lams)\) and 
draw \(\lamsa\sim\Pi(\cdot\mid\bxi^\star)\). Consequently, under the reverse 
move from \((\lams,\lamsa)\) to \((\lambda,\lama)\),
\[
q(\lambda,\lama\mid\lams,\lamsa)
=
\Pi(\lambda\mid\bxi^\star)\Pi(\lama\mid\bxi).
\]
The factors $\lambda(\cdot)$ and $p(\lambda)$ cancel between the target and
proposal ratios. The normalizing constants associated with the forward and reverse 
proposals cancel,
$C(\bxi)C(\bxi^\star)/\{C(\bxi^\star)C(\bxi)\}=1$,
while the reference constant $C(\wbxi)$ cancels between the two augmented
target densities.

\subsubsection{Algorithm}

One $\lambda$-update consists of the following steps:
\begin{enumerate}
    \item At the current state $\lambda$ and $\lama$, compute $\bxi = \bxi(\lambda)$.
    \item Generate a proposal:
    \begin{enumerate}
        \item Draw $\lams \sim \Pi(\cdot \mid \bxi)$ and compute
        $\bxi^\star = \bxi(\lams)$.
        \item Draw $\lamsa \sim \Pi(\cdot \mid \bxi^\star)$.
    \end{enumerate}
    \item Accept $\lams$ with probability $\min\{1, r\}$, where
    $r$ is evaluated using the expression above.
\end{enumerate}
In implementation, a convenient choice for $\wbxi$ is a fixed
reference value obtained from an initial pilot fit or from a post--burn--in
chain state. This reference stabilizes the ratio estimator.

\subsubsection{Random Measure Drawing}

The factor $e^{-b(\theta_{ij};\lambda)}$ of $\Pi(\cdot \mid \bxi)$ in
\eqref{eq:supp-poster-crm-new} contains the total mass
$\int_{\YY}\exp(\theta_{ij}\ty)\lambda(d\ty)$ of the exponentially tilted
measure. This normalizing mass couples the CRM jumps and complicates direct
simulation of $\lambda$.
\begin{proposition}[Auxiliary-variable augmentation]\label{prop:supp-u-post}
Following \citet{james2009posterior} and
\citet{catalano2024unified}, introduce an auxiliary variable $u_{ij}$ for
each $\ty_{ij}$. This augmentation removes the coupling and restores a
jump-wise factorization of the CRM update. The conditional distribution of
$\bu$ given $(\bty,\bxi)$ is
\begin{equation} \label{eq:supp-u-post}
    p(\bu \mid \bty, \bxi)
    \propto
    \exp\left\{
        -\int_{\YY}
        \log\left[
            1
            +
            \sum_i \sum_j
            u_{ij}
            \exp\left[
                \theta_{ij} t
                +
                h_{ij}
            \right]
        \right]
        G_N(dt)
    \right\},
\end{equation}
where $G_N = \alpha G_0 + \sum_\ell n_\ell^\star \delta_{z_\ell^\star}$,
$\{z_\ell^\star\}$ are the unique values of $\{\ty_{ij}\}$, with
multiplicities $\{n_\ell^\star\}$.
\end{proposition}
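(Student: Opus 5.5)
The plan is the standard James--Lijoi--Pr\"unster latent-variable trick, applied to the auxiliary target $\Pi(\lambda\mid\bxi)$ in \eqref{eq:supp-poster-crm-new}. The coupling comes from the factor $\prod_{i,j}\exp\{-b(\theta_{ij};\lambda)\}=\prod_{i,j}\{\int_{\YY} e^{\theta_{ij}t}\lambda(dt)\}^{-1}$. First I would absorb the fixed constants $e^{h_{ij}}$ into this factor, so that the tilted masses are $T_{ij}(\lambda)=\int_{\YY} e^{\theta_{ij}t+h_{ij}}\lambda(dt)$. Each reciprocal is then written through the gamma identity
\begin{equation*}
T_{ij}^{-1}=\int_0^\infty \exp\{-u_{ij}T_{ij}(\lambda)\}\,du_{ij}.
\end{equation*}
Introducing one $u_{ij}\ge 0$ per observation gives a joint density in $(\lambda,\bu)$. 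In it $\lambda$ enters only through $\exp\{-\int_{\YY}\sum_{i,j}u_{ij}e^{\theta_{ij}t+h_{ij}}\lambda(dt)\}$ times the point evaluations $\prod_{i,j}\lambda(\ty_{ij})$, with the $\bu$-free factors $e^{\theta_{ij}\ty_{ij}}$ moved into the proportionality constant. Because the exponent is linear in $\lambda$ and $\lambda$ is a CRM, this is the jump-wise factorization claimed in the statement.

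Second, I would marginalize $\lambda$ to obtain $p(\bu\mid\bty,\bxi)$. Group the $\ty_{ij}$ into unique values $z_\ell^\star$ with multiplicities $n_\ell^\star$. The ``moment'' factor $\prod_{i,j}\lambda(\ty_{ij})$ then becomes $\prod_\ell \lambda(\{z_\ell^\star\})^{n_\ell^\star}$. By the Palm/disintegration formula for CRMs (as in \citet{james2009posterior}), the expectation under the gamma CRM with intensity $\nu$ in \eqref{CRM} factorizes into two pieces. The first is the Laplace functional over the non-fixed part,
\[
\exp\Bigl\{-\alpha\int_{\YY}\int_0^\infty (1-e^{-s\,w(t)})s^{-1}e^{-s}\,ds\,G_0(dt)\Bigr\}
=\exp\Bigl\{-\alpha\int_{\YY}\log[1+w(t)]G_0(dt)\Bigr\},
\]
where $w(t)=\sum_{i,j}u_{ij}e^{\theta_{ij}t+h_{ij}}$. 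The second is, for each $\ell$, the factor $\int_0^\infty s^{n_\ell^\star}e^{-s w(z_\ell^\star)}s^{-1}e^{-s}ds=\Gamma(n_\ell^\star)\,[1+w(z_\ell^\star)]^{-n_\ell^\star}$. Writing the second as $\exp\{-n_\ell^\star\log[1+w(z_\ell^\star)]\}$ and combining it with the first yields exactly $\exp\{-\int_{\YY}\log[1+w(t)]\,G_N(dt)\}$ with $G_N=\alpha G_0+\sum_\ell n_\ell^\star\delta_{z_\ell^\star}$. All remaining factors ($\Gamma(n_\ell^\star)$, $G_0$-densities at the atoms, and the $e^{\theta_{ij}\ty_{ij}}$) do not involve $\bu$, which gives \eqref{eq:supp-u-post}. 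As a by-product, conditionally on $\bu$ the posterior of $\lambda$ is a CRM with intensity $s^{-1}e^{-s(1+w(t))}ds\,\alpha G_0(dt)$ plus fixed atoms at $z_\ell^\star$ with $\mathrm{Gamma}(n_\ell^\star,1+w(z_\ell^\star))$ jumps. This is the decoupled update.

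The main obstacle is doing the marginalization rigorously. The ``density'' $\prod\lambda(\ty_{ij})p(\lambda)$ is only formal, so I would interpret it through the Palm formula for Poisson random measures, i.e.\ as a joint law of $(\lambda,\bty)$ in which ties are handled via the multiplicities. I also need to check that the $u$-integrals converge. They do, since $T_{ij}>0$ a.s.\ and $\int\log(1+w)\,dG_0<\infty$ whenever $\YY$ is bounded. Finally, I must keep track that $\bxi$ is treated as fixed, so the $h_{ij}$ act only as constant reweightings.
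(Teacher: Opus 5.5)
Your proposal is correct and follows essentially the argument the paper invokes by citing Proposition~2 of \citet{alam2025dpglm}. That argument introduces one gamma-identity auxiliary variable per tilted normalizing constant; folding in the fixed $e^{h_{ij}}$ only rescales $u_{ij}$, which is why $h_{ij}$ appears in the formula. It then marginalizes the gamma CRM via the Laplace functional and the fixed-atom factors $\Gamma(n_\ell^\star)[1+w(z_\ell^\star)]^{-n_\ell^\star}$, which combine into the $G_N$ integral, so your write-up is simply a more explicit version of the paper's one-line proof.
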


\begin{proof}
The proof follows the augmentation argument in Proposition 2 of
\citet{alam2025dpglm}.
\end{proof}

Thus, the random-measure update proceeds by first sampling $\bu$ from
\eqref{eq:supp-u-post} and then sampling $\lambda$ conditional on $\bu$.
\begin{proposition}[Conditional CRM draw]\label{prop:supp-crm-post}
Specifically,
\begin{equation} \label{eq:supp-crm-post}
    \lambda \mid \bu, \bty, \bxi \overset{\text{d}}{=} 
    \lambda^o + \sum_\ell J_\ell \delta_{z_\ell^\star}
\end{equation}
where
$\lambda^o \sim CRM(\nu^o)$ has the L\'evy intensity
\[
\nu^o(ds,dt) =
\frac{1}{s}
\exp\left\{
    -s\left(
        1 + \sum_i \sum_j
        u_{ij}\exp\left[
            \theta_{ij}t+h_{ij}
        \right]
    \right)
\right\}
\,ds\,\alpha G_0(dt).
\]
The random jump sizes $\{J_\ell\}$ are generated from
\[
p
\left( J_\ell = 
s \mid \bu, \boldsymbol{\theta},
z_\ell^\star, n_\ell^\star
\right)
\propto
s^{n_\ell^\star-1}
e^{-s\{1+\psi(z_\ell^\star;\bu,\boldsymbol{\theta})\}}
=
\operatorname{Gamma}
\left(
n_\ell^\star,
\psi(z_\ell^\star;\bu,\boldsymbol{\theta})+1
\right).
\]
where
\[
\psi(z_\ell^\star; \bu, \boldsymbol{\theta})
=
\sum_i \sum_j u_{ij} e^{\theta_{ij} z_\ell^\star + h_{ij}}.
\]
\end{proposition}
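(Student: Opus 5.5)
The plan is to derive the conditional law of $\lambda$ given $(\bu,\bty,\bxi)$ from the joint density of $(\lambda,\bu)$ implied by the augmentation in Proposition~\ref{prop:supp-u-post}, and then to read off the posterior structure of a CRM in the style of \citet{james2009posterior}.

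First I write out the augmentation explicitly. For each $(i,j)$ I use the gamma identity
\[
e^{-b(\theta_{ij};\lambda)}=\Big\{\int e^{\theta_{ij}t}\lambda(dt)\Big\}^{-1}=\int_0^\infty \exp\Big\{-u_{ij}\int e^{\theta_{ij}t}\lambda(dt)\Big\}du_{ij}.
\]
The constant factor $e^{h_{ij}}$ in $\Pi(\lambda\mid\bxi)$ is absorbed by rescaling $u_{ij}$. With this rescaling the exponent becomes $-\sum_{i,j}u_{ij}\int e^{\theta_{ij}t+h_{ij}}\lambda(dt)$, which matches the form of $\psi$ and $\nu^o$. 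The joint density of $(\lambda,\bu)$ given $(\bty,\bxi)$ is then proportional to
\[
p(\lambda)\prod_{\ell}\lambda(\{z_\ell^\star\})^{n_\ell^\star}\exp\Big\{-\int \psi(t;\bu,\boldsymbol\theta)\,\lambda(dt)\Big\},
\]
up to factors not involving $\lambda$. This follows because $\prod_{i,j}\lambda(\ty_{ij})=\prod_\ell \lambda(\{z_\ell^\star\})^{n_\ell^\star}$ and because $\sum_{i,j}u_{ij}e^{\theta_{ij}t+h_{ij}}=\psi(t)$.

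The main step is to decompose $\lambda=\lambda^o+\sum_\ell J_\ell\delta_{z_\ell^\star}$, where $\lambda^o$ is the restriction of $\lambda$ to the complement of the atoms $\{z_\ell^\star\}$. I will use the Palm formula (Mecke/Campbell) for the underlying Poisson process of jumps with intensity $\nu$. Multiplying by $\prod_\ell\lambda(\{z_\ell^\star\})^{n_\ell^\star}$ is a moment-type operation, so disintegrating with respect to the locations forces exactly one jump $J_\ell$ at each $z_\ell^\star$. Each such jump has density proportional to $s^{n_\ell^\star}\rho(s)$, with $\rho(s)=s^{-1}e^{-s}$ the jump part of $\nu$, and is independent of the remaining Poisson process. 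The exponential factor $e^{-\int\psi\,d\lambda}=\prod(\text{jumps})\,e^{-s\psi(t)}$ then factorizes over jumps. On the fixed atoms it contributes $e^{-J_\ell\psi(z_\ell^\star)}$, giving the density $s^{n_\ell^\star-1}e^{-s(1+\psi(z_\ell^\star))}$, i.e.\ $\mathrm{Gamma}(n_\ell^\star,1+\psi(z_\ell^\star))$. On the rest of the process, exponential tilting of a Poisson random measure by $e^{-\sum s\psi(t)}$ is again a Poisson random measure, with intensity $e^{-s\psi(t)}\nu(ds,dt)$. This is precisely $\nu^o$. The normalizing factor $\exp\{-\int\!\int(1-e^{-s\psi(t)})\nu(ds,dt)\}=\exp\{-\int\log(1+\psi)\alpha\,dG_0\}$ depends only on $\bu$. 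This also serves as a consistency check with Proposition~\ref{prop:supp-u-post}.

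I expect the rigorous handling of the factor $\prod_\ell\lambda(\{z_\ell^\star\})^{n_\ell^\star}$ to be the main obstacle. As written, $\lambda(\ty_{ij})$ is only meaningful as a density with respect to a dominating measure when the $z_\ell^\star$ are themselves random. I would handle this as \citet{james2009posterior} and \citet{alam2025dpglm} do. I will treat $\bty$ as a sample from the normalized tilted measure and use the multivariate Mecke formula for $\mathbb E[\prod_\ell\lambda(dz_\ell)^{n_\ell}\,e^{-\int\psi d\lambda}]$. The key point is that $G_0$ is diffuse, so distinct $z_\ell^\star$ correspond almost surely to distinct jumps and ties within a jump give the powers $n_\ell^\star$. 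Given that, the remaining identifications are routine Laplace-functional calculations.
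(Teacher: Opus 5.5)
Your proposal is correct and takes essentially the same route as the paper. The paper's proof only cites the conditional CRM representation of Proposition~3 in \citet{alam2025dpglm} (the James--Lijoi--Pr\"unster posterior characterization) after conditioning on $\bu$; your gamma-identity augmentation, Mecke-formula treatment of the atoms $z_\ell^\star$, and Poisson tilting by $e^{-s\psi(t)}$ are precisely the content of that cited result. You also make explicit two points the paper leaves implicit: $h_{ij}$ is fixed under $\Pi(\cdot\mid\bxi)$, and it enters $\psi$ only through a rescaling of $u_{ij}$ that must match Proposition~\ref{prop:supp-u-post}, as your normalizing-factor check confirms.
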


\begin{proof}
The result follows from the conditional CRM representation in Proposition 3 of
\citet{alam2025dpglm}, after conditioning on the auxiliary variables $\bu$.
The residual CRM has L\'evy intensity $\nu^o$, while each observed atom
$z_\ell^\star$ receives a Gamma-distributed jump with shape
$n_\ell^\star$ and rate $1+\psi(z_\ell^\star;\bu,\boldsymbol{\theta})$.
\end{proof}

Note that integrating out $\bu$ from $\Pi(\lambda \mid \bu, \bty, \bxi)$ from 
the definition of \eqref{eq:supp-crm-post} with respect to $p(\bu \mid \bty, \bxi)$
yields $\Pi(\lambda \mid \bxi)$ in \eqref{eq:supp-poster-crm-new}.
Hence, to draw the random measure, we first sample $\bu = \{u_{ij}\}$ from
\eqref{eq:supp-u-post} using component-wise MH updates with 10 sweeps,
and then draw the random measure from \eqref{eq:supp-crm-post}.

\section{Prior Predictive Comparison}
\label{sec:supp-prior-comparison}

We compared implications of the CDPGLM and beta mixed-model priors before
observing data. This is a comparison of prior-induced marginal variability and
repeated-visit dependence. Figure~\ref{fig:supp-prior-comparison} summarizes
these prior-predictive comparisons across the three target mean response
levels. For each CDPGLM draw, we generated
the normalized baseline measure $G=\lambda/\lambda(\YY)$, represented by a
truncated stick-breaking approximation to a Dirichlet process,
\[
  G = \sum_{k=1}^{20} w_k\delta_{a_k},\quad
  V_k\sim\mathrm{Beta}(1,\alpha),\quad
  w_k=V_k\prod_{r<k}(1-V_r),\quad
  a_k\sim G_0=\mathrm{Uniform}(0,1),\quad \alpha=1.
\]
The final retained stick receives the remaining mass, and the weights are
normalized after truncation. Here $G$ is the prior random distribution of
response locations; $\alpha$ controls its concentration around $G_0$.

To make the marginal-variance comparison at a specified mean $\mu$, the script
exponentially tilts the atom weights by a scalar $\theta$:
\[
  w_k(\theta)=\frac{w_k\exp(\theta a_k)}
    {\sum_{r=1}^{20}w_r\exp(\theta a_r)},\qquad
  \sum_{k=1}^{20}w_k(\theta)a_k=\mu.
\]
It solves the second equation numerically for $\theta$, then computes the
tilted distribution's variance
\[
  v_{\mathrm{CDP}}(\mu)=\sum_{k=1}^{20}w_k(\theta)(a_k-\mu)^2.
\]
The comparison uses target means $\mu\in\{0.25,0.50,0.75\}$. At each fixed
mean, the beta response distribution is parameterized as
$Y\mid\mu,\phi\sim\mathrm{Beta}(\mu\phi,(1-\mu)\phi)$, with conditional
variance $v_{\mathrm{Beta}}(\mu)=\mu(1-\mu)/(\phi+1)$. The first row of
Figure~\ref{fig:supp-prior-comparison} compares this beta variance with the
corresponding CDPGLM variance at each target mean.

The bottom row displays the
prior-predictive Pearson correlation between two repeated outcomes,
$\operatorname{Corr}(Y_{i1},Y_{i2})$, at fixed mean response levels
$\mu\in\{0.25,0.50,0.75\}$. For the CDPGLM, this is the observed-scale
correlation induced by the tilted CRM distribution and Gaussian copula; for
the beta mixed model, it is induced by the subject-specific random intercept
and integrated over its distribution.

\begin{figure}[!htp]
\centering
\includegraphics[width=\textwidth]{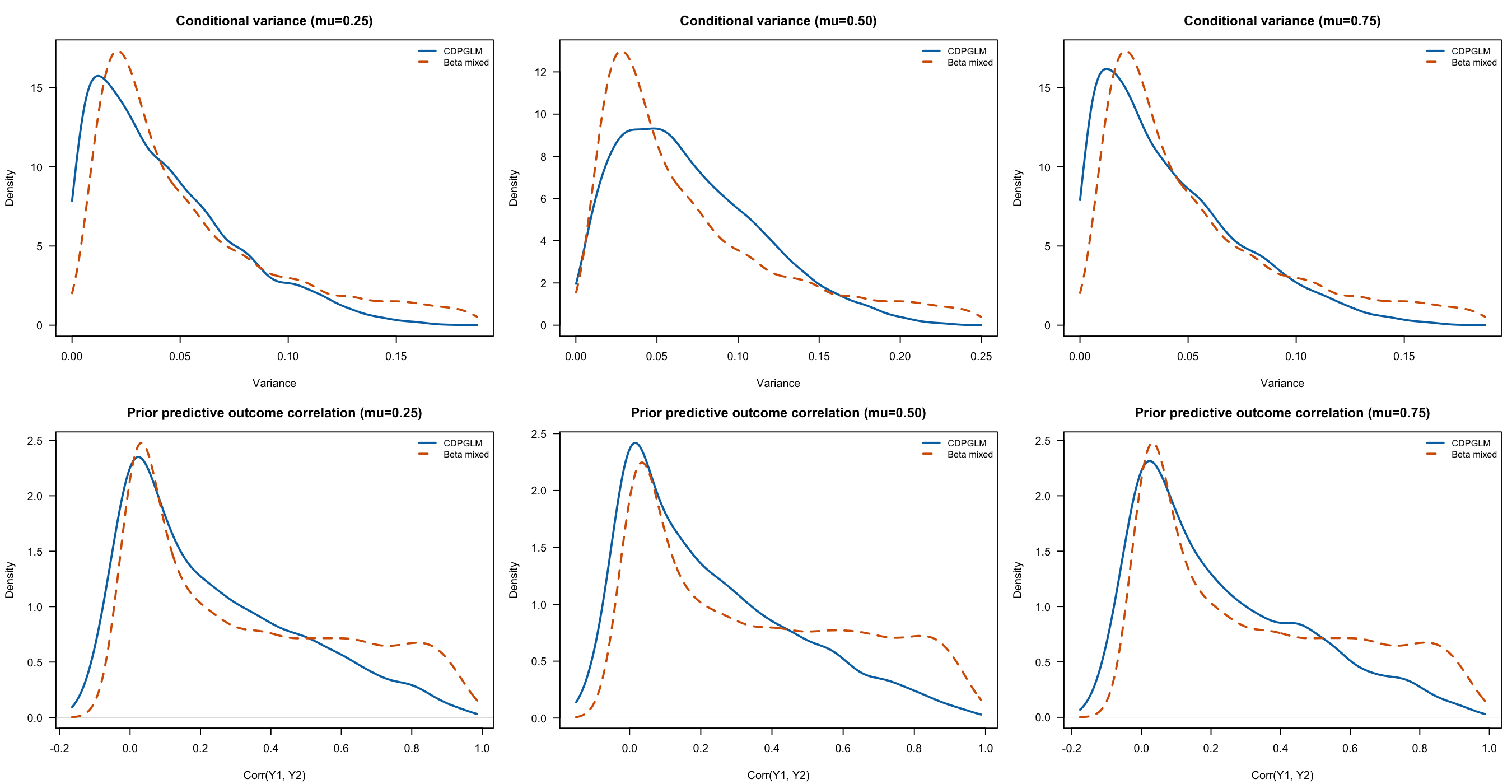}
\caption{Prior predictive comparison of the CDPGLM and beta mixed model. The
top row shows conditional response variance, and the bottom row shows
prior-predictive Pearson correlations between two repeated outcomes. Columns
correspond to target means $\mu=0.25$, $0.50$, and $0.75$.}
\label{fig:supp-prior-comparison}
\end{figure}

\section{Expanded Simulation Details and Results}
\label{sec:supp-sim-settings}

\subsection{Data--generating setting}

For the ET generator and simulation setting discussed in Section \ref{sec:simulation},
we used an exchangeable Gaussian copula with $\rho = 0.80$ and the empirical 
distribution $f_n(\by)$ of the LHS data, shown in Figure~\ref{fig:et_empirical_grid}, 
as the simulation truth for the normalized baseline distribution 
$\lambda(\cdot) / \lambda(\YY)$.

\begin{figure}[!htp]
\centering
\includegraphics[width=0.8\textwidth]{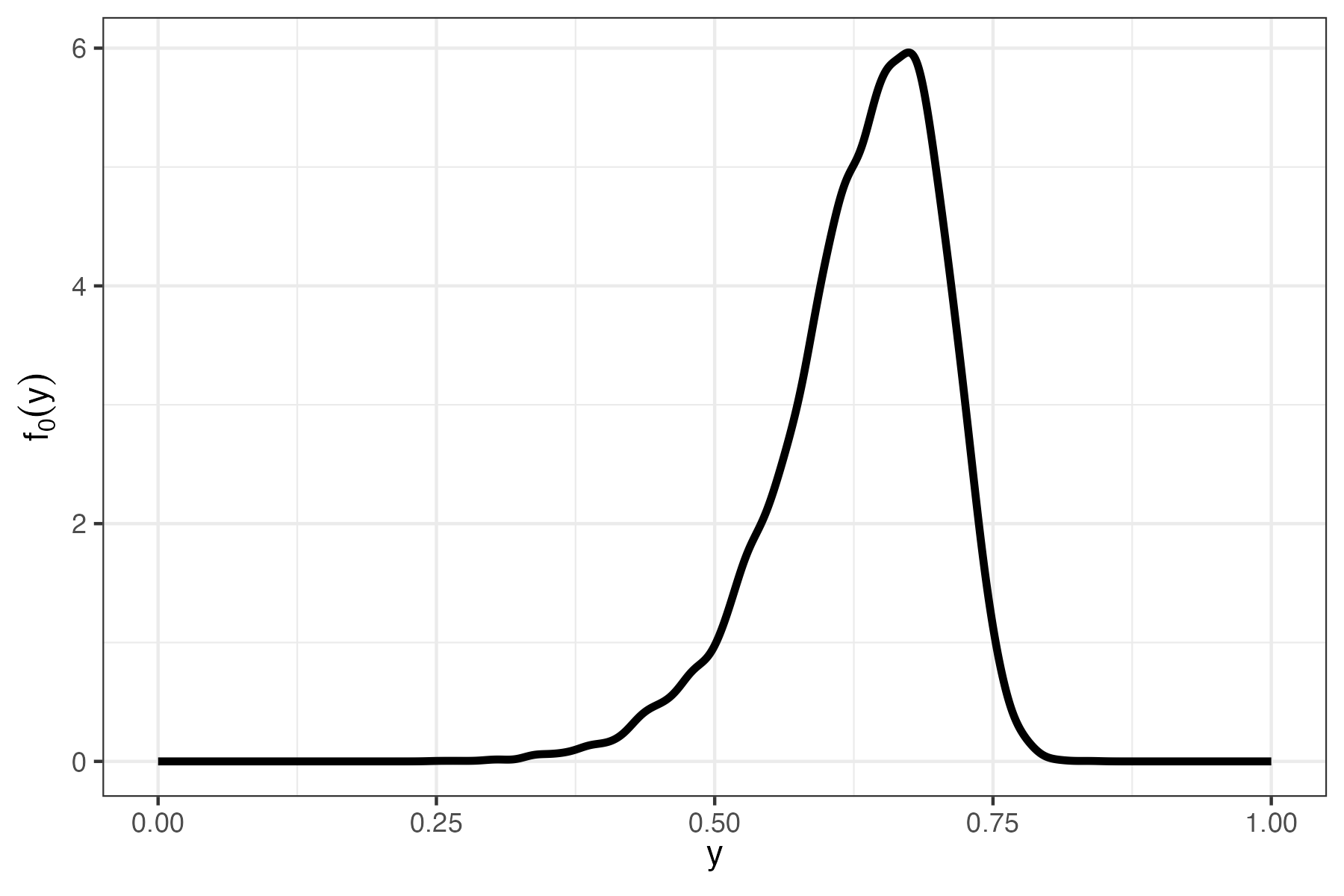}
\caption{Empirical distribution $f_n(\cdot)$ of the LHS data.}
\label{fig:et_empirical_grid}
\end{figure}

\subsection{Supplementary Regression Results}

Supplementary Figure~\ref{fig:sim_et_bias_sd} presents raw regression bias and
standard-deviation summaries by coefficient, method, and sample size. The
within-patient response-scale correlations are shown in Supplementary Figure~
\ref{fig:sim_et_within_patient_correlations}, and Supplementary Figure~
\ref{fig:sim_et_regression_rmse} reports fixed--effect RMSE by coefficient,
method, and sample size under the ET setting.

\begin{figure}[!htp]
\centering
\includegraphics[width=0.95\textwidth]{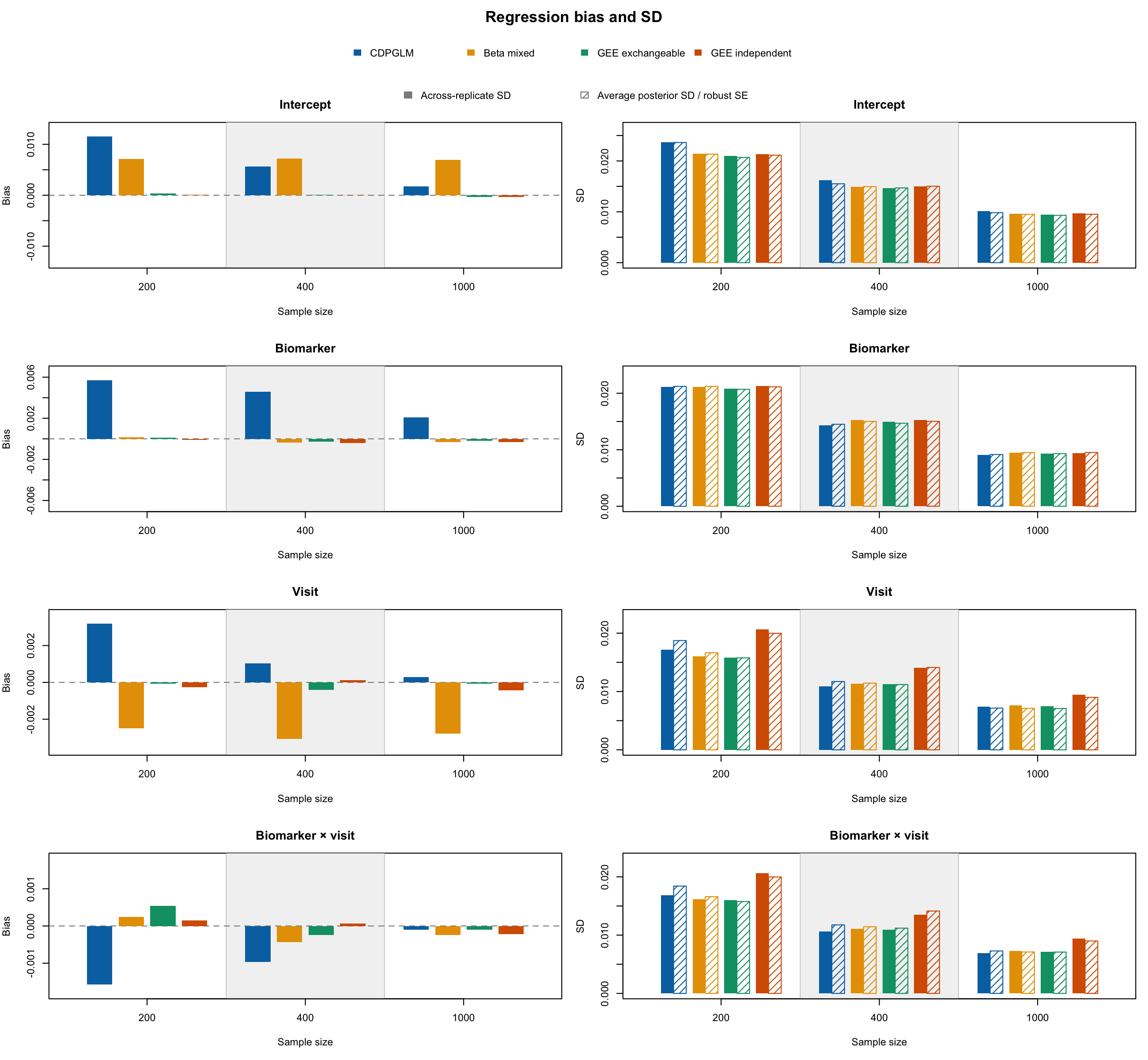}
\caption{ET simulation regression bias and standard deviation summaries. The
left column shows bias, and the right column shows across-replicate standard
deviation as solid bars together with average posterior standard deviation for
the Bayesian methods or average robust standard error for the GEE methods as
hatched bars. Rows correspond to regression coefficients, the horizontal axis
shows sample size, and colors identify methods. The beta mixed-model results
are shown for context; its coefficients are subject-specific and are not a
direct benchmark for the marginal coefficients of the CDPGLM and GEE methods.}
\label{fig:sim_et_bias_sd}
\end{figure}

\begin{figure}[!htp]
\centering
\includegraphics[width=\textwidth]{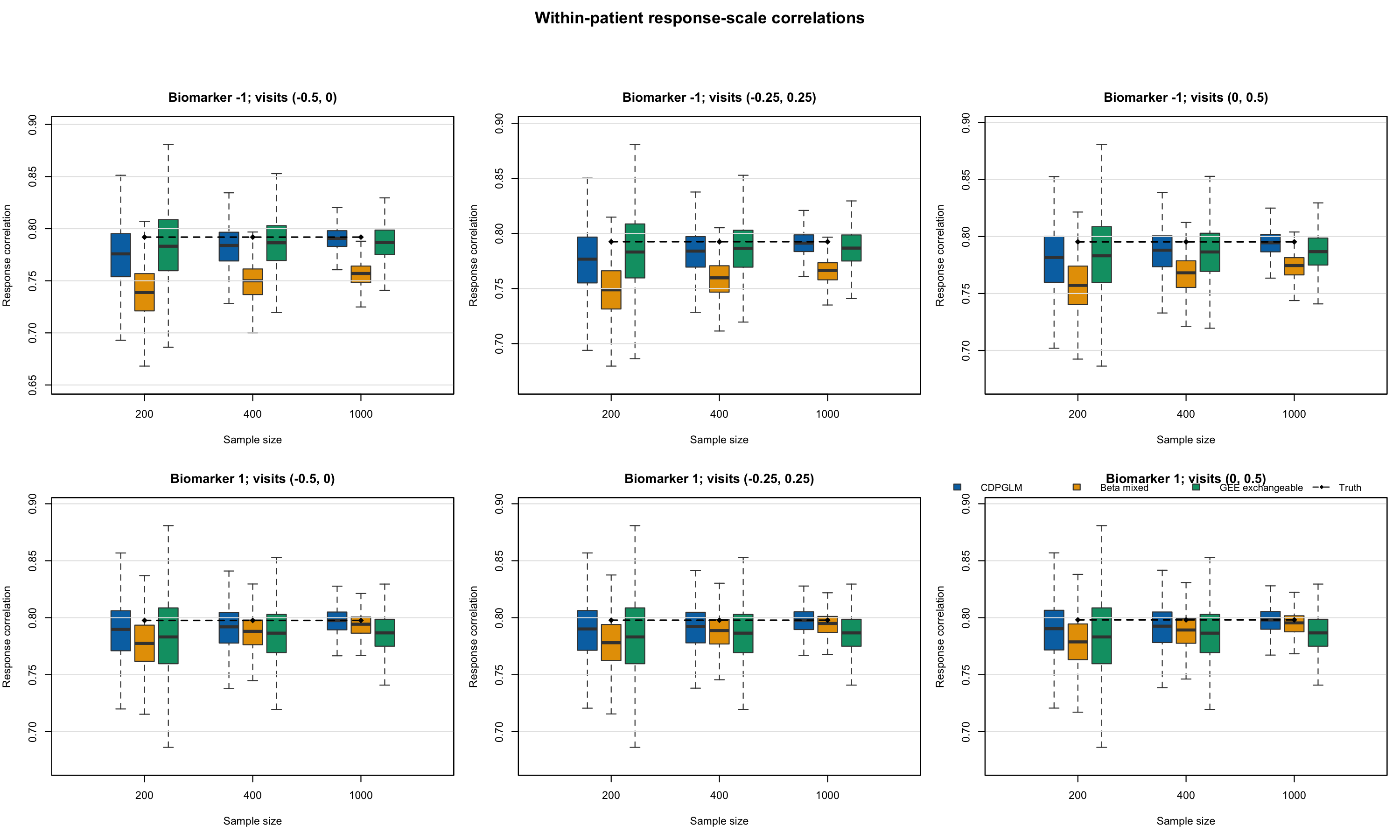}
\caption{ET simulation within-patient response-scale correlations at
representative biomarker profiles and visit pairs, by sample size and method.
Boxplots show the estimated correlations across simulation replicates, and
the dashed lines with points indicate the corresponding true correlations.}
\label{fig:sim_et_within_patient_correlations}
\end{figure}

\begin{figure}[!htp]
\centering
\includegraphics[width=\textwidth]{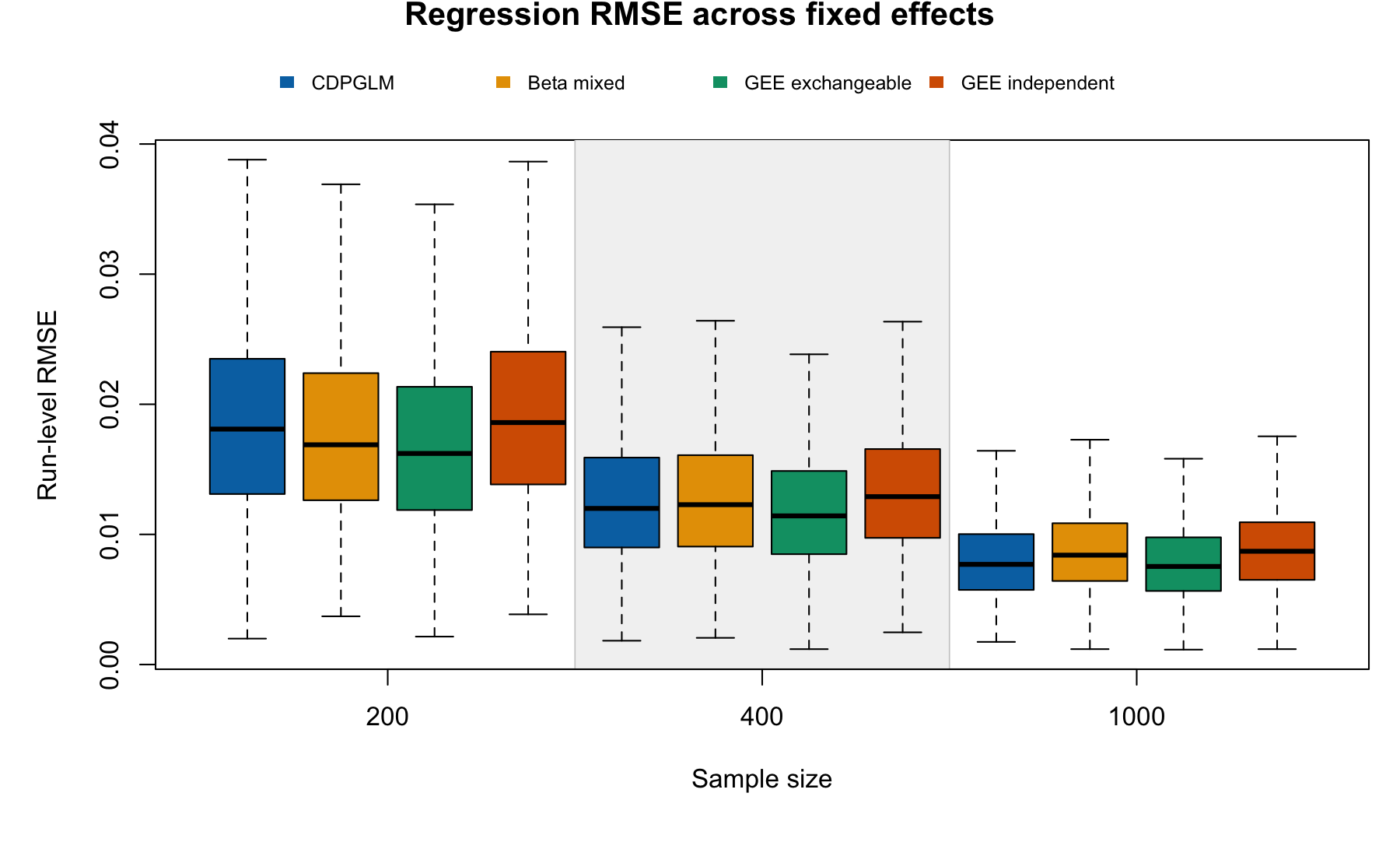}
\caption{ET simulation: regression RMSE across fixed effects and methods by
sample size. Lower values are better.}
\label{fig:sim_et_regression_rmse}
\end{figure}

\clearpage\newpage
\section{Additional Case Study Data Summary and Results}
\label{sec:supp-case-study}

The Lung Health Study analysis included 11,196 observations from 2,562 patients.
Patients contributed a mean of 4.370 observations (SD 1.002), with a median of
5 observations and a range of 2--5 observations.
The study population consisted of continuing smokers with available genotype
information and complete outcome and smoking-exposure covariates, and each
patient contributed at least two observed visits. Descriptive summaries for 
the covariates are reported in 
Tables~\ref{tab:supp-case-study-covariates}--\ref{tab:supp-case-study-site}.

\begin{table}[!htp]
\centering
\caption{Lung Health Study: descriptive summaries for continuous and binary
covariates. Continuous variables are summarized by their mean and standard
deviation; binary variables are summarized by the percentage at level 1.}
\label{tab:supp-case-study-covariates}
\begin{tabular}{lc}
\hline
Covariate & Mean (SD) or Percentage \\
\hline
Age & $-0.178$ ($0.670$) \\
BMI & $0.157$ ($0.796$) \\
Pack-years & $0.029$ ($0.920$) \\
Biomarker & 56.0\% \\
Female & 37.1\% \\
\hline
\end{tabular}
\par\smallskip
\noindent\begin{minipage}{0.92\textwidth}
\footnotesize\textit{Note.} Age, BMI, and pack-years were centered
and scaled for analysis. Age was centered at 50 and scaled by 10 units.
BMI was centered at 25 and scaled by 5 units.
Pack-years was centered at 40 and scaled by 20 units.
Biomarker and Female are binary covariates encoded as $-1$ and $1$; the
percentages reported in the table correspond to level $1$.
\end{minipage}
\end{table}

\begin{table}[!htp]
\centering
\caption{Lung Health Study: number of observations by visit.}
\label{tab:supp-case-study-visit}
\begin{tabular}{lrr}
\hline
Visit & Number of observations & Percentage \\
\hline
$-0.5$ & 2419 & 21.6\% \\
$-0.25$ & 2432 & 21.7\% \\
$0$ & 2232 & 19.9\% \\
$0.25$ & 2063 & 18.4\% \\
$0.5$ & 2050 & 18.3\% \\
\hline
\end{tabular}
\end{table}

\begin{table}[!htp]
\centering
\caption{Lung Health Study: number of patients by study site.}
\label{tab:supp-case-study-site}
\begin{tabular}{lrr}
\hline
Site & Number of patients & Percentage \\
\hline
1 & 230 & 9.0\% \\
2 & 258 & 10.1\% \\
3 & 242 & 9.4\% \\
4 & 261 & 10.2\% \\
5 & 302 & 11.8\% \\
6 & 254 & 9.9\% \\
7 & 246 & 9.6\% \\
8 & 310 & 12.1\% \\
9 & 276 & 10.8\% \\
10 & 183 & 7.1\% \\
\hline
\end{tabular}
\end{table}

The full posterior summaries
for the regression coefficients used in the analysis are reported in
Table~\ref{tab:supp-case-study-coefficients}. Entries are posterior mean
(posterior standard deviation), rounded to three decimal places.

\begin{table}[!htp]
\centering
\caption{Lung Health Study: posterior summaries for all regression coefficients
under the two alternative models. Entries are posterior mean (posterior
standard deviation), rounded to three decimal places.}
\label{tab:supp-case-study-coefficients}
\setlength{\tabcolsep}{5pt}
\begin{tabular}{lcc}
\hline
Parameter & Beta mixed & CDPGLM \\
\hline
Intercept & 0.463 (0.018) & 0.476 (0.017) \\
Biomarker & -0.007 (0.005) & -0.005 (0.005) \\
Visit & -0.129 (0.003) & -0.122 (0.003) \\
Age & -0.094 (0.008) & -0.089 (0.007) \\
Female & 0.012 (0.006) & 0.008 (0.008) \\
BMI & -0.001 (0.004) & -0.010 (0.005) \\
Pack-years & -0.024 (0.006) & -0.019 (0.008) \\
Site 2 & 0.079 (0.025) & 0.077 (0.025) \\
Site 3 & 0.083 (0.025) & 0.071 (0.028) \\
Site 4 & 0.065 (0.027) & 0.083 (0.018) \\
Site 5 & 0.015 (0.023) & 0.008 (0.017) \\
Site 6 & 0.113 (0.022) & 0.092 (0.026) \\
Site 7 & 0.014 (0.025) & 0.020 (0.019) \\
Site 8 & 0.121 (0.024) & 0.031 (0.019) \\
Site 9 & 0.127 (0.023) & 0.062 (0.020) \\
Site 10 & 0.039 (0.025) & 0.049 (0.023) \\
Biomarker $\times$ Visit & -0.009 (0.003) & -0.010 (0.003) \\
\hline
\end{tabular}
\end{table}

\end{document}